\newtheorem{proposition}{Proposition}
\newtheorem{corollary}{Corollary}
\newtheorem{example}{Example}
\newcommand{\ccaption}[2]{\caption[#1]{\textit{#1.} #2}}
\definecolor{dred}{rgb}{.4,0,0}
\definecolor{lred}{rgb}{1,.8,.8}
\newcommand{\tr}{\operatorname{tr}} 
\newcommand{\Tr}{\operatorname{Tr}} 
\renewcommand{\Im}{\operatorname{Im}} %
\renewcommand{\Re}{\operatorname{Re}} %
\newcolumntype{C}[1]{>{\centering\arraybackslash}p{#1}}
\definecolor{Gray}{gray}{0.95}
\newcolumntype{a}{>{\columncolor{Gray}}c}
\newcommand{\ii}{\mathrm{i}}
\newcommand{\ie}{{\it i.e.},\ }
\newcommand{\eg}{{\it e.g.},\ }
\newcommand{\id}{\mathbb{1}} 
\newcommand{\comm}[2]{\left[{#1},{#2}\right]}
\newcommand{\nn}{\nonumber\\}
\newcommand{\ff}{\mathrm{f}}
\newcommand{\bb}{\mathrm{b}}
\newcommand{\ee}[1]{\mathrm{e}^{#1}} 
\newcommand{\CC}{\mathbb{C}}
\begin{document}

\title{Entanglement dualities in supersymmetry}

\author{Robert H. Jonsson}
\email{robert.jonsson@mpq.mpg.de}
\affiliation{Max-Planck-Institut für Quantenoptik, Hans-Kopfermann-Str.~1, 85748 Garching, Germany}

\author{Lucas Hackl}
\email{lucas.hackl@unimelb.edu.au}
\affiliation{School of Mathematics and Statistics \& School of Physics, The University of Melbourne, Parkville, VIC 3010, Australia}
\affiliation{QMATH, Department of Mathematical Sciences, University of Copenhagen, Universitetsparken 5, 2100 Copenhagen, Denmark}

\author{Krishanu Roychowdhury}
\email{krishanu.1987@gmail.com}
\affiliation{Department of Physics, Stockholm University, SE-106 91 Stockholm, Sweden}

\begin{abstract}
We derive a general relation between the bosonic and fermionic entanglement in the ground states of supersymmetric quadratic Hamiltonians. For this, we construct canonical identifications between bosonic and fermionic subsystems. Our derivation relies on a unified framework to describe both bosonic and fermionic Gaussian states in terms of so-called linear complex structures $J$. The resulting dualities apply to the full entanglement spectrum between the bosonic and the fermionic systems, such that the von Neumann entropy and arbitrary Renyi entropies can be related. 
We illustrate our findings in one and two-dimensional systems, including the paradigmatic Kitaev honeycomb model. 
While typically supersymmetry preserves features like area law scaling of the entanglement entropies on either side, we find a peculiar phenomenon, namely, an amplified  scaling of the entanglement entropy (``super area law") in bosonic subsystems when the dual fermionic subsystems develop almost maximally entangled modes.
\end{abstract}

\maketitle


\section{Introduction}
As a long-established concept in quantum physics, supersymmetry (SUSY) finds applications in a wide range of fields from particle physics to condensed matter in both relativistic and nonrelativistic settings \cite{gol1989extension, ramond1971dual, neveu1971factorizable, sourlas1985introduction, baer2006weak}. In a nutshell, SUSY posits a fundamental equivalence between the two classes of elementary particles with distinct statistics. Mathematically, it maps the fermionic degrees of freedom to the bosonic ones and vice versa. From this perspective, they are equivalent and dubbed {\it superpartners} of each other. 

While normally SUSY is conceived as a symmetry in quantum field theories, it as well applies to much simpler models of quantum mechanics such as harmonic oscillators or the hydrogen atom \cite{cooper1983aspects, cooper1995supersymmetry, kirchberg2003algebraic, gangopadhyaya2017supersymmetric}. 
The SUSY Hamiltonian $\hat{H}$ can be constructed from a generating operator $\hat{\cal Q}$ (also called the supercharge operator) which, for the harmonic oscillator problem, takes a remarkably simple form $\hat{\cal Q}=\sqrt{\omega}\hat{b}^\dagger \hat{c}$ 
where $\hat b$ ($\hat c$) denotes the bosonic (fermionic) annihilation operator. The corresponding SUSY Hamiltonian
\begin{align}
 \hat{H} = \{\hat{\cal Q}, \hat{\cal Q}^\dagger\} = \omega(\hat{b}^\dagger \hat{b} + \hat{c}^\dagger \hat{c}) \equiv \hat{H}_{\bb} + \hat{H}_{\ff}
 \label{eq:susyosc}
\end{align}
then decomposes into two simple quadratic Hamiltonians: one for a bosonic oscillator ($\hat{H}_{\bb}$) and the other for a fermionic one ($\hat{H}_{\ff}$). When it comes to dealing with real bosons or fermions, a Hermitian form of the generating operator  $\hat{\cal Q}=\hat{\cal Q}^\dagger$
(and accordingly, $\hat{H}=\hat{\mathcal Q}^2$) is useful, as  also is the case for the present work.

Such a simple setting is readily amenable to accommodate multiple bosonic and fermionic modes, or in other words, systems of free (noninteracting) bosons and fermions (in  the  continuum or on a lattice) if the generating operator $\hat{\cal Q}$ involves the bosonic and fermionic operators to linear order \cite{lawler2016supersymmetry, attig2019topological}, as shown in the previous harmonic oscillator example and also will be demonstrated later. The resulting partner Hamiltonians (referred to as $\hat{H}_{\bb}$ and $\hat{H}_{\ff}$ for bosons and fermions, respectively) are isospectral in their one-particle excitations except for zero modes. Inclusion of zero modes in SUSY has, in addition, a topological aspect (referred to as ``Witten index'' \cite{witten1982constraints} and interpreted in several other contexts, e.g., see \cite{kane2014topological}) and has been studied to a great extent; however, that discussion is not relevant to this work.

Ground states of a quadratic Hamiltonian (bosonic or fermionic) garner special attention as they provide a fertile ground to trace several properties of the system, which they are part of, analytically. These states are also known as {\it Gaussian states}~\cite{wang2007quantum,weedbrook2012gaussian,adesso2014continuous,shi2018variational,Derezinski:2013dra}. The study of the von Neumann bipartite entanglement entropy plays a central role in the quantum foundations of statistical mechanics
\cite{deutsch_91,srednicki_94,rigol_dunjko_08,d2016quantum, gogolin2016equilibration,deutsch2018eigenstate,goldstein_lebowitz_06, popescu_short_06,tasaki_98,polkovnikov2011colloquium,vidmar2017entanglement,liu2018quantum,vidmar2018volume,hackl2019average,Vidmar:2017pak,bianchi2019typical,lydzba2020eigenstate}, in quantum information theory \cite{bennett1998quantum,eisert2006entanglement,Hayden:2006,Hayden:2007cs,Sekino:2008he,Hosur:2015ylk,Roberts:2016hpo,Fujita:2017pju,Lu:2017tbo,Fujita:2018wtr} and condensed matter dedicated to classifying novel states of matter, particularly those with topological quantum order \cite{kitaev2006topological,levin2006detecting, furukawa2007topological,yao2010entanglement, isakov2011topological,depenbrock2012nature, jiang2012identifying,gong2014emergent, roychowdhury2015z}. While measuring entanglement is numerically costly for a generic quantum state, it greatly simplifies for the Gaussian states \cite{sorkin1983entropy,peschel2003calculation}.

The main result of this work is a duality between the eigenvalues of reduced density operators in the bosonic and the fermionic system, \ie the so-called entanglement spectra. For Gaussian states, these spectra are fully encoded in the eigenvalues $\pm\ii\lambda$ of the so-called restricted complex structure $J$, where $\lambda_\bb\in[1,\infty)$ for bosons and $\lambda_\ff\in[0,1]$ for fermions. In supersymmetric systems, the charge operator $\hat{\mathcal{Q}}$ provides an identification between the bosonic and the fermionic system, so that picking a subsystem on the bosonic side automatically defines a related subsystem on the fermionic side and vice versa. Our key finding is that, under identification, we have $\lambda_\bb=1/\lambda_\ff$ and $J_\bb=-J^{-1}_\ff$, where we use $\bb$ and $\ff$ to refer to the bosonic and fermionic structure, respectively.

Applying our results to examples, we also discuss consequences of the derived duality for the entanglement entropy in Gaussian states related by SUSY.
Though not always, entanglement entropy often turns out to be a sufficient measure (among  others) of the entanglement information encoded in a quantum state \cite{horodecki2009quantum, laflorencie2016quantum, bengtsson2017geometry}.
In fact, in a number of strongly correlated systems, this quantity serves as a smoking gun to identify topological quantum order in the ground states. 
Examples include Kitaev's celebrated model of Majorana fermions on a honeycomb lattice \cite{kitaev2006anyons}.  
In earlier works \cite{attig2019topological}, the bosonic SUSY analog of this model has been realized and shown to inherit the topological properties from its fermionic partner. We will also regard this model here, as one of our examples to illustrate the aspects of entanglement dualities considering the SUSY-related Gaussian states.

Generally speaking, for noncritical ground states in $d$ dimensions (for both fermionic and bosonic systems), the entanglement entropy of a subsystem $A$ obeys the so-called ``area law'' (for a review, see \cite{eisert2010colloquium, laflorencie2016quantum} and references therein)
\begin{align}
 {\cal S}(A) \propto L^{d-1} + \dots,
 \label{eq:ent_area_law1}
\end{align}
meaning that, in the thermodynamic limit, the leading order contribution to the entanglement entropy of $A$ with the rest of the system scales with its surface area $L^{d-1}$ when $L$ denotes the linear dimension of $A$. 
For critical states, however, the ellipses in \eqref{eq:ent_area_law1} can contain sublinear corrections (\eg logarithmic corrections for free fermions), and for topologically ordered states, a universal constant called ``topological entanglement entropy.''

The identification provided by the supercharge $\hat{\mathcal Q}$ facilitates a natural connection between a subsystem in one lattice and a subsystem in the superpartner lattice. {\it A priori}, this identification does not warrant a local subsystem in one system to get mapped to a localized subsystem in its superpartner system. However, we will show that, even when well-localized subregions are identified of both lattices, the scaling of the entanglement entropy of the dual supersymmetric subsystem can be very different -- on the bosonic side, it can drastically exceed the area law exhibited by the original fermionic subsystem.

In summary, this study extends the concept of SUSY beyond a spectral mapping between (supersymmetric) quadratic Hamiltonians to discuss the general identification of fermionic and bosonic supersymmetric Gaussian systems, their subsystems, and entanglement spectra as implied by the supercharge operator. Exemplifying lattice models in one dimensions (1D) and two dimensions (2D), we investigate the locality properties of these identification maps and their consequences in the context of entanglement area laws.
In doing so, we employ the idea of {\it K{\"a}hler structure}, which brings the bosonic and fermionic Gaussian states within a unified frame to work in. A further  merit of this approach lies in treating the involved geometric structures independent of their matrix representation in a given basis, as discussed at length, \eg in  \cite{hackl2019minimal,hackl2020bosonic,windt2020local}. 
\\

The article is structured as follows:  In Sec.\ \ref{sec:review-Gaussian-states}, we review the unified K{\"a}hler structure formalism to describe bosonic and fermionic Gaussian states and apply it to supersymmetric quadratic Hamiltonians, where a charge operator induces an identification map at the classical phase space level. 
In Sec.\ \ref{sec:entanglement-duality}, we explore how the entanglement entropies in the bosonic and fermionic systems are related and introduce a general theorem on their entanglement spectra. In Sec.\ \ref{sec:discussion}, we summarize our key findings complemented by lattice models as applications and discuss future work.

\section{Gaussian states and supersymmetry}\label{sec:review-Gaussian-states}
In this section, we review the unified formalism that treats both bosonic and fermionic Gaussian states on the same footing. For this, we present a hands-on introduction to the formalism of\ \cite{hackl2020bosonic}, which can be consulted for a more rigorous exposition. Other reviews of Gaussian states include\ \cite{weedbrook2012gaussian}.

\subsection{Bosonic and fermionic Gaussian states}\label{eq:Gaussian-states}
We consider a bosonic or fermionic system with $N$ degrees of freedom described by a Hilbert space $\mathcal{H}$. We can always find a basis of creation and annihilation operators which we denote as $\hat b_i$ and $\hat b_i^\dagger$ for bosons, and as $\hat c_i$ and $\hat c_i^\dagger$ for fermions, but we use $\hat a_i$ and $\hat a_i^\dagger$ in expressions valid for both bosons and fermions (see Table~\ref{tab:notations}). These operators satisfy the canonical commutation or anti-commutation relations 
\begin{equation}
    \begin{aligned}
    [][\hat{b}_i,\hat{b}_j^\dagger]&=\delta_{ij}\,, && \textbf{(bosons)}\\
    \{\hat{c}_i,\hat{c}_j^\dagger\}&=\delta_{ij}\,. && \textbf{(fermions)}
    \end{aligned}\label{eq:CCR-CAR}
\end{equation}
Out of these, we can construct a set of $2N$ Hermitian operators
\begin{equation}
    \begin{aligned}
    \hat{q}_i=\tfrac{1}{\sqrt{2}}(\hat{b}_i^\dagger + \hat{b}_i)\,\,\quad\hat{p}_i&=\tfrac{\ii}{\sqrt{2}}(\hat{b}_i^\dagger-\hat{b}_i)\,, && \textbf{(bosons)}\\
    \hat{\gamma}_i=\tfrac{1}{\sqrt{2}}(\hat{c}_i^\dagger + \hat{c}_i)\,,\quad\hat{\eta}_i&=\tfrac{\ii}{\sqrt{2}}(\hat{c}_i^\dagger-\hat{c}_i)\,, && \textbf{(fermions)}
    \end{aligned}
\end{equation}
which satisfy the commutation or anti-commutation relations
\begin{equation}
    \begin{aligned}
    [][\hat{q}_i,\hat{q}_j]&=[\hat{p}_i,\hat{p}_j]=0\,,\,\,\,\,\,\,\,\,\,\,[\hat{q}_i,\hat{p}_j]=\ii \delta_{ij}, \hspace{-1mm}&& \textbf{(bosons)}\\
    \{\hat{\gamma}_i,\hat{\gamma}_j\}&=\{\hat{\eta}_i,\hat{\eta}_j\}=\delta_{ij}\,,\, \{\hat{\gamma}_i,\hat{\eta}_j\}=0. && \textbf{(fermions)}
    \end{aligned}
\end{equation}
For bosons, these operators are commonly called quadrature operators (generalized positions and momenta), while for fermions, they are called the Majorana operators.

Up to normalization, there is a unique state $\ket{0}\in\mathcal{H}$, such that $\hat{a}_i\ket{0}=0~\forall i$, which is called the vacuum state with respect to our choice of operators. An orthonormal basis of $\mathcal{H}$ can then be constructed by successively applying creation operators on $\ket0$,
\begin{align}
    \ket{n_1,\dots,n_N}=\prod^N_{i=1}\frac{(\hat{a}_i^\dagger)^{n_i}}{\sqrt{n_i!}}\ket{0}\,,\label{eq:Fock-basis}
\end{align}
where $n_i\in\mathbb{N}$ for bosons and $n_i=0,1$ for fermions.

We can now collect the $2N$ operators to form the vector
\begin{align}\label{eq:quantization_map_quads_Majorana}
    \hat{\xi}^a\equiv\begin{cases}
    (\hat{q}_1,\dots,\hat{q}_N,\hat{p}_1,\dots,\hat{p}_N) &\textbf{(bosons)}\\
    (\hat{\gamma}_1,\dots,\hat{\gamma}_N,\hat{\eta}_1,\dots,\hat{\eta}_N) &\textbf{(fermions)}
    \end{cases}\,,
\end{align}
where we have the index $a=1,\dots,2N$ (later, we use Latin indices exclusively for bosons and Greek indices for fermions, but for now we use Latin indices for both). It is well known that, for both bosons and fermions, any operator $\mathcal{O}$ can be described as a power series in $\hat{\xi}^a$ or as a limit of such a series. For many physically relevant operators, this series will be finite and of low order. The canonical commutation or anti-commutation relations in terms of $\hat\xi^a$ read
\begin{equation}\label{eq:CCR_CAR_in_terms_of_forms}
\begin{aligned}
    [][\hat{\xi}^a,\hat{\xi}^b]&=\ii \Omega^{ab}\,, && \textbf{(bosons)}\\
    \{\hat{\xi}^a,\hat{\xi}^b\}&=G^{ab}\,, && \textbf{(fermions)}
\end{aligned}
\end{equation}
where $\Omega^{ab}$ is called the symplectic form and $G^{ab}$ is a metric. With respect to our choice of basis in \eqref{eq:quantization_map_quads_Majorana}, they are represented by the matrices
\begin{align}
    \Omega\equiv\begin{pmatrix}
    0 & \id\\
    -\id & 0
    \end{pmatrix}\quad\text{and}\quad G\equiv\begin{pmatrix}
    \id & 0\\
    0 & \id
    \end{pmatrix}\,,\label{eq:OmG-standard}
\end{align}
and will play an important role in later formulas.

\begin{table}[t]
    \centering
    \renewcommand{\arraystretch}{1.75}
    \begin{tabular}{ p{1.7cm} p{3.2cm} p{3.4cm}} 
        \toprule
         & {\bf Real basis} & {\bf Complex basis}   \\
        \colrule
        {\bf Bosons} & Quadrature operators \newline
        $\hat{\xi}_\bb\equiv(\hat{q}_j,\hat{p}_k)$
        & 
        CCR operators  \newline
        $(\hat{b}_j,\hat{b}^\dagger_k)$
        \\
        {\bf Fermions} & Majorana operators\newline
        $\hat{\xi}_\ff\equiv (\hat \gamma_j,\hat \eta_k) $
            & CAR operators 
            \newline
        $(\hat{c}_j,\hat{c}^\dagger_k)$
        \\
        {\bf Unified} & Hermitian operators $\hat{\xi}$ &  Ladder operators $(\hat{a}_j,\hat{a}^\dagger_k)$ \\
        \botrule
    \end{tabular}
    \ccaption{Overview of notations for operator bases}{Listed are real (self-adjoint) and complex operator bases for bosons and fermions, as well as a unified notation used throughout this work. For an $N$-mode quantum system, indices are in the range $j,k \,{\in}\, \{1,\dots,N\}$. The creation and annihilation operators, in a complex basis, satisfy \emph{canonical commutation/anti-commutation relations} (CCR/CAR).}
    \label{tab:notations}
\end{table}

We define\footnote{
Here, we restrict to Gaussian states with $\braket{J|\hat{\xi}^a|J}=0$, \ie the 1-point correlation function vanishes. However, the formalism extends to also include displacements $z^a=\braket{J|\hat{\xi}^a|J}$ for bosons, as explained in\ \cite{hackl2020bosonic}.
} a Gaussian state $\ket{J}\in\mathcal{H}$ as the solution of\footnote{
Note that\ \eqref{eq:def-Gaussian} only fixes $\ket{J}$ up to a complex phase. This does not cause any problems when considering individual Gaussian states, where the complex phase is unphysical. However, if considering superpositions of Gaussian states $\ket{J}+\ket{\tilde{J}}$, we would need to parametrize explicitly how the respective complex phases are related.}
\begin{align}
    \frac{1}{2}(\delta^a{}_b+\ii J^a{}_b)\hat{\xi}^b\ket{J}=0\,.\label{eq:def-Gaussian}
\end{align}
As shown in\ \cite{hackl2020bosonic}, a solution of~\eqref{eq:def-Gaussian} exists only if $J^2=-\id$ and the following compatibility conditions are satisfied:
\begin{itemize}
    \item For bosons, $G^{ab}:=-J^a{}_c\Omega^{cb}$ is a metric, \ie symmetric and positive definite.
    \item For fermions, $\Omega^{ab}:=J^a{}_cG^{cb}$ is a symplectic form, \ie anti-symmetric and non-degenerate. 
\end{itemize}
The matrix $J$ is called a linear complex structure.

\emph{In\ \eqref{eq:def-Gaussian} and the rest of this paper, we use Einstein's summation convention,\footnote{
All our equations with indices are fully basis independent and compatible with Penrose's abstract index notation\ \cite{penrose1984spinors}. In fact, we can even use complex bases, such as $\hat{\xi}^a\equiv(\hat{a}_1,\dots,\hat{a}_N,\hat{a}_1^\dagger,\dots,\hat{a}_N^\dagger)$ (see~\cite{hackl2020bosonic}).} where a sum is implied over repeated indices (index contraction). The position of the index indicates if it can be contracted with vectors $v^a\in V$ in phase space or dual vectors $w_a\in V^*$ in dual phase space. Objects with two indices are often written as matrices, where matrix multiplication is the same as contraction over adjected indices. This may require a transpose, \eg $\Omega^{ac}J^b{}_c$ needs to be written as $(\Omega J^\intercal)^{ab}=\Omega^{ac}(J^\intercal)_c{}^b$ to make the indices $c$ adjacent.}

The above relations introduce for every Gaussian state $\ket{J}$ the object $G^{ab}$ for bosons and $\Omega^{ab}$ for fermions, such that we have in both cases a so-called \emph{Kähler structure}: This is a triplet $(G,\Omega,J)$ such that
\begin{align}
    G^{ab}=-J^a{}_c\Omega^{cb}\quad\Leftrightarrow\quad \Omega^{ab}=J^a{}_cG^{cb}\,,\label{eq:GOmJ-relations}
\end{align}
the equivalence following from $J^2=-\id$. Moreover, we have $J\Omega J^\intercal=\Omega$ and $JGJ^\intercal=G$.

This definition of Gaussian states, unifying bosons and fermions, may appear surprising to readers more familiar with the definition of Gaussian states in terms of covariance matrices or Bogoliubov transformations. However, as shown in\ \cite{hackl2020bosonic}, these definitions are fully equivalent, as we review in the following.

\subsubsection{Covariance matrix}
The covariance matrix of a quantum state $\ket{\psi}$ with $\braket{\psi|\hat{\xi}^a|\psi}=0$ is defined as\footnote{Some authors use a different normalization or sign. The extension to states with $\braket{\psi|\hat{\xi}^a|\psi}\neq 0$ is also straight-forward and explained in\ \cite{hackl2020bosonic}.}
\begin{align}
    \Gamma^{ab}=\begin{cases}
    \braket{\psi|\hat{\xi}^a\hat{\xi}^b+\hat{\xi}^b\hat{\xi}^a|\psi} & \textbf{(bosons)}\\
    -\ii\braket{\psi|\hat{\xi}^a\hat{\xi}^b-\hat{\xi}^b\hat{\xi}^a|\psi} & \textbf{(fermions)}
    \end{cases}\,,
\end{align}
\ie the covariance matrix is exactly the expression that is not already fixed by the canonical commututation or anti-commutation relations. Given a Gaussian state $\ket{J}$ with associated Kähler structures $(G,\Omega,J)$, it follows from\ \eqref{eq:def-Gaussian} that we have the 2-point function
\begin{align}
    C_2^{ab}:=\braket{J|\hat{\xi}^a\hat{\xi}^b|J}=\frac{1}{2}(G^{ab}+\ii\Omega^{ab})\,,
\end{align}
To prove this, we define $\hat{\xi}^a_\pm=\frac{1}{2}(\delta^a{}_b\mp\ii J^a{}_b)\hat{\xi}^b$, which depend on $J$. With this, we find $\hat{\xi}^a=\hat{\xi}^a_++\hat{\xi}^a_-$, and we have $\hat{\xi}^a_-\ket{J}=0$ and $\bra{J}\hat{\xi}^a_+=0$, due to\ \eqref{eq:def-Gaussian}. This implies
\begin{align}
    C_2^{ab}=\braket{J|\hat{\xi}^a_-\hat{\xi}^b_+|J}=\begin{cases}
    \braket{J|[\hat{\xi}^a_-,\hat{\xi}^b_+]|J} & \textbf{(bosons)}\\
    \braket{J|\{\hat{\xi}^a_-,\hat{\xi}^b_+\}|J} & \textbf{(fermions)}
    \end{cases}
\end{align}
due to $\braket{J|\hat{\xi}^b_+\hat{\xi}^a_-|J}=0$. Finally, the commutator or anti-commutator above can be evaluated using\ \eqref{eq:CCR_CAR_in_terms_of_forms} to be
\begin{equation}
\begin{aligned}
\hspace{-2mm}[\hat{\xi}^a_-,\hat{\xi}^b_+]&=\tfrac{1}{4}(\id+\ii J)^a{}_c\ii\Omega^{cd}(\id-\ii J)^b{}_d\,,&&\textbf{(bosons)}\\
\hspace{-2mm}\{\hat{\xi}^a_-,\hat{\xi}^b_+\}&=\tfrac{1}{4}(\id+\ii J)^a{}_c G^{cd}(\id-\ii J)^b{}_d\,,&&\textbf{(fermions)}
\end{aligned}
\end{equation}
which in both cases combines to $\frac{1}{2}(G+\ii \Omega)$ via\ \eqref{eq:GOmJ-relations}.

We can reverse this argument to use $C_2^{ab}$ (and thus the covariance matrix $\Gamma^{ab}$ contained in it) of a general state $\ket{\psi}$, with $\braket{\psi|\hat{\xi}^a|\psi}=0$, to check if $\ket\psi$ is a Gaussian state \ie $\ket\psi=\ket{J}$ and find $J$. For this, we first compute $G^{ab}=2\Re\braket{\psi|\hat{\xi}^a\hat{\xi}^b|\psi}$ and $\Omega^{ab}=2\Im\braket{\psi|\hat{\xi}^a\hat{\xi}^b|\psi}$ and then invert\ \eqref{eq:GOmJ-relations} to compute
\begin{align}
J^a{}_b=\Omega^{ac}(G^{-1})_{cb}\,.\label{eq:J-from-G-Om}
\end{align}
One can then show\ \cite{hackl2020bosonic} that $J^2=-\id$ is necessary and sufficient for $\ket{\psi}$ to be the Gaussian state $\ket{J}$, \ie a solution of \eqref{eq:def-Gaussian}. However, if $J^2\neq -\id$, $\ket{\psi}$ is not a Gaussian state.

\subsubsection{Bogoliubov transformations}
These transformations map Gaussian states into Gaussian states, hence, are also termed \emph{Gaussian transformations}. For a Gaussian state $\ket{J}$ annihilated by a set of annihilation operators $\hat{a}'_i$, \ie $\hat{a}'_i\ket{J}=0$, the following transformation relates them to the original $\hat{a}_i$ 
\begin{align}
    \hat{a}'_i=\sum_j(\alpha_{ij}\hat{a}_i+\beta_{ij}\hat{a}_j^\dagger)\,,
\end{align}
where the matrix elements $\alpha_{ij}$ and $\beta_{ij}$ characterize the transformation. Defining a Gaussian state $\ket{J_0}$ as the state annihilated by all $\hat{a}_i$, \ie $\hat{a}_i\ket{J_0}=0$, we can use\ \eqref{eq:J-from-G-Om} and\ \eqref{eq:OmG-standard} to compute that $J_0$ is represented by the matrix
\begin{align}
    J_0\equiv\begin{pmatrix}
    0 & \id\\
    -\id & 0
    \end{pmatrix}\,,\label{eq:J0-standard}
\end{align}
from which we deduce the resulting Bogoliubov transformed state $\ket{J}$ with $J=MJ_0M^{-1}$, where the matrix $M$ is \cite{windt2020local}
\begin{align}
    M=\begin{pmatrix}
    \Re\alpha+\Re\beta & \Im\beta-\Im\alpha\\
    \Im\alpha+\Im\beta & \Re\alpha-\Re\beta
    \end{pmatrix}\,.
\end{align}
The matrix $M$ is a group element of the symplectic group $\mathrm{Sp}(2N,\mathbb{R})$ for bosons or the orthogonal group $\mathrm{O}(2N,\mathbb{R})$ for fermions, which induces the unitary representation of Gaussian transformations on the Hilbert space\ \cite{hackl2020bosonic}.

\begin{example}
The simplest bosonic Gaussian state is the ground state of the harmonic oscillator with Hamiltonian $\hat{H}=\frac{1}{2}(\hat{p}^2+\omega^2\hat{q}^2)$ that takes the form
\begin{align}
    \ket{J}=\frac{1}{\cosh{\tfrac{\rho}{2}}}\sum^\infty_{n=0}(-\tanh{\tfrac{\rho}{2}})^n\ket{2n}
\end{align}
with respect to the basis\ \eqref{eq:Fock-basis} and $\rho=\log{\omega}$. Its covariance matrix $\Gamma^{ab}=G^{ab}$ and complex structure as
\begin{align}
    \Gamma=G\equiv\begin{pmatrix}
    \omega & 0\\
    0 & \frac{1}{\omega}
    \end{pmatrix}\quad\text{and}\quad J\equiv\begin{pmatrix}
    0 & \omega\\
    -\frac{1}{\omega} & 0
    \end{pmatrix}
\end{align}
with respect to the basis $\hat{\xi}^a\equiv(\hat{q},\hat{p})$.\\
The simplest fermionic Gaussian states are the basis states $\ket{J_+}=\ket{0}$ and $\ket{J_-}=\ket{1}$, which are also the only Gaussian states for a single degree of freedom. Their covariance matrices $\Gamma_\pm=\Omega_\pm$ and complex structures $J_\pm$ happen to coincide in the basis $\hat{\xi}^a\equiv(\hat{q},\hat{p})$ as
\begin{align}
    \Gamma_\pm=\Omega_\pm\equiv J_\pm\equiv\begin{pmatrix}
    0 & \pm 1\\
    \mp 1 & 0
    \end{pmatrix}\,.
\end{align}
\end{example}

In summary, this section has reviewed how bosonic and fermionic Gaussian states can be efficiently described in a unified formalism using the triplet $(G,\Omega,J)$ of Kähler structures. Physical properties, such as expectation values or entanglement entropies can be directly computed from them.

\subsection{Supercharge operator and supersymmetric Gaussian states}
We will now consider a system that contains both, bosonic and fermionic degrees of freedom. We denote the bosonic operators by $\hat{\xi}^{a}_{\mathrm{b}}$ and the fermionic ones by $\hat{\xi}^{\alpha}_{\mathrm{f}}$, where we use Latin letters for bosons and Greek letters for fermions. The commutation and anti-commutation relations then read
\begin{align}
    [\hat{\xi}^{a}_{\mathrm{b}},\hat{\xi}^{b}_{\mathrm{b}}]=\ii \Omega^{ab}_\bb\quad\text{and}\quad\{\hat{\xi}^{\alpha}_{\mathrm{f}},\hat{\xi}^{\beta}_{\mathrm{f}}\}=G^{\alpha\beta}_\ff\,,
\end{align}
while the bosonic and the fermionic operators commute $[{\hat\xi^a_\bb},{\hat\xi^\alpha_\ff}]=0$.

The SUSY transformation between the bosonic and the fermionic degrees of freedom can be generated by a Hermitian supercharge operator \cite{lawler2016supersymmetry}
\begin{align}
    \hat{ \mathcal{Q}}=R_{\alpha a}\hat{\xi}^{\alpha}_{\mathrm{f}}\hat{\xi}^a_{\mathrm{b}}\,,
\end{align}
with a real-valued $R$. As mentioned already in~\eqref{eq:susyosc}, this supercharge defines a supersymmetric Hamiltonian
\begin{align}
\begin{split}
    \hspace{-3mm}\hat{H}&=\tfrac{1}{2}\{\hat{\mathcal{Q}},\hat{\mathcal{Q}}\}
    =\tfrac{1}{2}h^\bb_{ab}\hat{\xi}^{a}_{\mathrm{b}}\hat{\xi}^{b}_{\mathrm{b}}+\tfrac{\ii}{2}h^\ff_{\alpha\beta}\hat{\xi}^{\alpha}_{\mathrm{f}}\hat{\xi}^{\beta}_{\mathrm{f}}\, \equiv \hat{H}_\bb+\hat{H}_\ff
    ,\label{eq:sym-hamiltonian}
\end{split}
\end{align}
which splits into a bosonic part $\hat H_\bb$ and a fermionic part $\hat H_\ff$. Their Hamiltonian forms are
\begin{align}
    h^\ff_{\alpha\beta}&= R_{\alpha a}\Omega^{ab}R^\intercal_{b\beta}
        \,,\label{eq:sym-hamiltonian3}\\
    h^\bb_{ab}&= R^\intercal_{a\alpha}G^{\alpha\beta}R_{\beta b}
    \,,\label{eq:sym-hamiltonian4}
\end{align}
which satisfy  $h^\ff_{\alpha\beta}=-h^\ff_{\beta\alpha}$ and $h_{ab}^\bb=h_{ba}^\bb$. 
Note that the full Hamiltonian's ground state energy $E_0=\ii R_{\alpha a}R_{\beta b} G^{\beta\alpha}\Omega^{ba}=\ii\tr (GR\Omega^\intercal R^\intercal)=0$ vanishes, as the bosonic and the fermionic contributions cancel each other. 

The excitation spectrum of $\hat{H}_\bb$ and $\hat{H}_\ff$ can be derived by diagonalizing  the Lie generators $K_{\mathrm{b}}$ and $K_{\mathrm{f}}$, defined via the relations\footnote{Alternatively, one can also exploit the Heisenberg equation of motion leading to $\tfrac{d}{dt}\hat{\xi}_{\mathrm{b}}^a=\ii[\hat{H},\hat{\xi}^a_{\mathrm{b}}]=\ii(K_{\mathrm{b}})^a{}_b\hat{\xi}_{\mathrm{b}}^b$ and similarly for $\hat{\xi}_{\mathrm{f}}^\alpha$.}
\begin{align}
    [\hat{H},\hat{\xi}_{\mathrm{b}}^a]=(K_{\mathrm{b}})^a{}_b\hat{\xi}_{\mathrm{b}}^b\quad\text{and}\quad[\hat{H},\hat{\xi}_{\mathrm{f}}^\alpha]=(K_{\mathrm{f}})^\alpha{}_\beta\hat{\xi}_{\mathrm{f}}^\beta\,.
\end{align}
One can show\ \cite{hackl2020bosonic} that these matrices are Lie algebra elements satisfying
\begin{align}
    K_{\mathrm{b}}\Omega=-\Omega K_{\mathrm{b}}^\intercal\quad\text{and}\quad K_{\mathrm{f}}G=-G K_{\mathrm{f}}^\intercal\,,
\end{align}
which implies $K_{\mathrm{b}}\in\mathfrak{sp}(2N,\mathbb{R})$ and $K_{\mathrm{f}}\in\mathfrak{so}(2N,\mathbb{R})$. Using the relations\ \eqref{eq:CCR-CAR} allows us to compute them explicitly as 
\begin{align}
    (K_{\mathrm{b}})^a{}_b&=\tfrac{1}{2}\Omega^{ac}(h_{cb}+h_{bc})=\Omega^{ac} R^\intercal_{c\alpha}G^{\alpha \beta}R_{\beta b}\,,\\
    (K_{\mathrm{f}})^\alpha{}_\beta&=\tfrac12 G^{\alpha\gamma}\left(h_{\gamma\beta}-h_{\beta\gamma}\right)
    =G^{\alpha\gamma} R_{\gamma a}\Omega^{ab}R^\intercal_{b\beta}\,.
\end{align}
From this, it is evident that $K_\bb$ and $K_\ff$ are isospectral except for the degeneracy of potential zero eigenvalues. 

The ground state of $\hat{H}$ is given by the tensor product
\begin{align}\label{eq:groundstate_ket}
    \ket{\mathrm{GS}}=\ket{J_{\mathrm{b}}}\otimes\ket{J_{\mathrm{f}}}\,,
\end{align}
where the associated $J_{\mathrm{b}}$ and $J_{\mathrm{f}}$ are computed from the generators as\ \cite{hackl2020bosonic,bianchi2018linear}\footnote{ Note that applying a function $f(K)$, such as the absolute value, to a diagonalizable matrix $K=U^{-1}DU$, where $D$ is a diagonal matrix containing the eigenvalues of $K$, is equivalent to applying $f$ to its eigenvalues, \ie $f(K)=U^{-1}f(D)U$. }
\begin{align}
    J_\bb=\left|K_\bb^{-1}\right|K_\bb\quad\text{and}\quad
    J_\ff=\left|K_\ff^{-1}\right|K_\ff\,.\label{eq:J-from-K}
\end{align}
These formulas may be surprising at first sight, but they can be readily checked using a basis, where the individual normal modes of $\hat{H}$ decouple. In this basis, we have
\begin{align}
    \hat{H}=\sum_i\frac{\omega_i}{2}(\hat{n}_i^{\bb}+\hat{n}_i^{\ff})\,,
\end{align}
where $\hat{n}_i=\hat a_i^\dagger\hat a_i$  are the normal mode number operators and $\omega_i$ are the one-particle excitation energies. Note that, due to $h^\bb_{ab}$ being positive, all $\omega_i$ are positive, and we choose $\hat{n}_i^\ff$, such that excitations increase energy. If we go into the associated basis $\hat{\xi}^a$, where $\hat{n}_i^{\bb}=\frac{1}{2}\left((\hat{q}^\bb_i)^2+(\hat{p}^\bb_i)^2\right)$ and $\hat{n}_i^{\ff}=\ii \hat{\gamma}_i\hat{\eta}_i$, the matrix representations of the generators are
\begin{align}
K_\bb\equiv K_\ff\equiv\oplus_i\begin{pmatrix}
0 & \omega_i\\
-\omega_i & 0
\end{pmatrix}\,.\label{eq:spectrumkb}
\end{align}
In this specific basis, $J_\bb$ and $J_\ff$ assume the standard form from\ \eqref{eq:J0-standard}, which then implies\ \eqref{eq:J-from-K}.

\begin{example}\label{ex:SUSY-Hamiltonian}
The simplest supersymmetric Hamiltonian consists of one bosonic and one fermionic degree of freedom. The respective supercharge operator is given by
\begin{align}
    \hat{\mathcal{Q}}=\hat{q}\hat{\gamma}+\hat{p}\hat{\eta}\,,
\end{align}
for which we find the Hamiltonian
\begin{align}
    \hat{H}=\hat{\mathcal{Q}}^2=\tfrac{1}{2}(\hat{q}^2+\hat{p}^2)+\tfrac{\ii}{2}(\hat{\gamma}\hat{\eta}-\hat{\eta}\hat{\gamma})\,.
\end{align}
(equivalent forms of $\hat{\cal Q}$ and $\hat H$ in terms of complex bosonic and fermionic operators are shown in the introduction). The associated Lie algebra generators are then given by
\begin{align}
    K_\bb\equiv K_\ff\equiv\begin{pmatrix}
    0 & 1\\
    -1 & 0
    \end{pmatrix}\,,
\end{align}
and the associated ground state is $\ket{\mathrm{GS}}=\ket{0_\bb}\otimes\ket{0_\ff}$.
\end{example}

\subsection{Supersymmetric identification maps}\label{sec:susy-identification}
We introduced supersymmetric Hamiltonians through the supercharge operator $\hat{\mathcal{Q}}$ as $\hat{H}=\hat{H}_\bb+\hat{H}_\ff$, where $\hat{H}_\bb$ and $\hat{H}_\ff$ have identical one-particle spectra. Both the bosonic and the fermionic part are described classically by phase spaces $V_\bb\simeq\mathbb{R}^{2N}$ and $V_\ff\simeq\mathbb{R}^{2N}$ (with the corresponding dual spaces denoted by $V^\ast_\bb$ and $V^\ast_\ff$), respectively, such that $V_\bb$ is equipped with the symplectic form $\Omega_\bb$, and $V_\ff$ is equipped with a metric $G_\ff$. 
The respective other structure in each space, \ie a metric $G$ on $V_\bb$ and a symplectic form $\Omega$ on $V_\ff$, is defined by the ground state $\ket{J}=\ket{J_\bb}\otimes\ket{J_\ff}$ of $\hat{H}$.

In this section, we now use the supercharge $\hat{\mathcal Q}$ to construct linear maps  $L_1:V_\bb\to V_\ff$ and $L_2:V_\ff\to V_\bb$ between the two phase spaces, that identify the spaces in such a way that the symplectic forms and metrics are mapped onto each other.

Under the above assumption that $R_{a\alpha}$ is real, the supercharge operator $\hat{\mathcal{Q}}=R_{\alpha a}\hat\xi^\alpha_\ff\hat\xi^a_\bb$ induces the \emph{supersymmetric identification maps} $T_1:V_\bb\to V_\ff$ and $T_2:V_\ff\to V_\bb$ as
\begin{align}\label{eq:identification_maps}
    (T_1)^\alpha{}_a=G^{\alpha\beta}R_{\beta a}\quad\text{and}\quad (T_2)^a{}_\alpha=\Omega^{ab}R^\intercal_{b\alpha}\,.
\end{align}
These are related to the Lie generators noting $K_\bb= T_2 T_1$ and $K_\ff=T_1 T_2$. 
Hence, $T_2$ maps the eigenvectors of $K_\ff$
(in $V_{\ff,\CC}$, the complexification on $V_\ff$) to the eigenvectors of $K_\bb$ with the same eigenvalue, and for $T_1$, the analogous holds:
\begin{equation}
\begin{aligned}
    K_\bb v_\bb &=\pm\ii \lambda_v v_\bb &&\Rightarrow &&K_\ff T_1v_\bb= \pm\ii\lambda_v  T_1v_\bb\\
    K_\ff w_\ff& =\pm\ii \lambda_w w_\ff&& \Rightarrow && K_\bb T_2w_\ff= \pm\ii\lambda_w  T_2w_\ff
\end{aligned}
\end{equation}
If only the spaces $V_\bb$ and $V_\ff$ are given, each equipped with Kähler structures $(G,\Omega,J)$, then there exists a large class of potential identification maps;\footnote{Given an identification map $T_1: V_\bb\to V_\ff$, we can define a new identification $T_1'=U_\ff T_1 U_\bb$, where both $U_\bb: V_\bb\to V_\bb$ and $U_\ff: V_\ff\to V_\ff$ need to preserve the respective Kähler structures. This implies that $U_\bb$ and $U_\ff$ form a representation of the group $\mathrm{U}(N)$. In our case, we also would like that $T_1$ maps $K_\bb$ onto $K_\ff$, which implies that the respective symmetry group will depend on the degeneracy of the one-particle spectrum. If $K_\bb$ (and thus also $K_\ff$) has $m$ distinct eigenvalue pairs $\pm\ii\lambda_i$ with degeneracy $d_i$ such that $\sum_{i=1}^m d_i=N$, the resulting symmetry group will be $\mathrm{U}(d_1)\times \dots \times \mathrm{U}(d_m)$. Only if the Hamiltonian is fully degenerate with $N$ eigenvalue pairs $\pm \lambda$, this will lead to the maximal symmetry group $\mathrm{U}(N)$ of possible identification maps $T_1'$.} however, the choice of $R_{\alpha b}$ fixes this freedom.

We can use the supersymmetric identification maps to construct normalized identification maps $L_1: V_\bb\to V_\ff$ and $L_2: V_\ff\to V_\bb$ as
\begin{align}\label{eq:J1_J2_definition}
    L_1=|K^{-1}_\ff|^{1/2} T_1,\quad L_2=|K^{-1}_\bb|^{1/2} T_2,
\end{align}
(where the form of $L_1$ was identified in\ \cite{attig2019topological}). 
These have the property that their products exactly reproduce the linear complex structures \begin{align}
    L_1L_2=J_\ff\quad\text{and}\quad L_2L_1=J_\bb,
\label{eq:J1J2_JfJb}
\end{align}
of the ground state of $\hat H$. 

To see this, it is convenient to work in the eigenbases of the generators $K_\bb$ and $K_\ff$. Let $v^{(\pm k)}\in V_{\bb,\CC}$ denote a basis of eigenvectors  of $K_\bb$ with eigenvalues $\pm \ii \lambda_k$. Then $\{T_1 v^{(\pm k)}\}$ is a basis of $V_{\ff,\CC}$ diagonalizing $K_\ff$. In fact, with respect to these bases $K_\bb$ and $K_\ff$ are represented by the same matrix. Accordingly, also $|K^{-1}_\ff|^{1/2}$ and $|K^{-1}_\bb|^{1/2}$ are represented by the same matrices. From this follows, in particular,
\begin{align}
    L_1&=|K^{-1}_\ff|^{1/2} T_1=T_1 |K^{-1}_\bb|^{1/2},\\
    L_2&=|K^{-1}_\bb|^{1/2} T_2=T_2 |K^{-1}_\ff|^{1/2}\,,
\end{align}
and hence, we have
\begin{align}
\begin{split}
    L_1L_2&=|K^{-1}_\ff|^{1/2} T_1 |K^{-1}_\bb|^{1/2} T_2\\
    &= |K^{-1}_\ff| T_1 T_2= |K^{-1}_\ff|K_\ff= J_\ff\,,
\end{split}\\
\begin{split}
    L_2L_1&=|K^{-1}_\bb|^{1/2} T_2|K^{-1}_\ff|^{1/2} T_1\\
    &=|K^{-1}_\bb|  T_2  T_1=|K^{-1}_\bb|K_\bb =J_\bb\,.
\end{split}
\end{align}

In the following, we use the identification maps to associate both linear observables and quadratic forms between the two supersymmetric partner systems. 
For this, it is important to note that, since the identification maps and their inverses  act on the phase spaces, \ie  they act on upper indices from the left, their corresponding transposes act on the dual phase spaces, \ie  on lower indices, as
\begin{align*}
    &V_\bb\xrightarrow{L_1}  V_\ff, && V_\ff \xrightarrow{(L_1)^{-1}}  V_\bb, && V^*_\bb\xrightarrow{(L_1^\intercal)^{-1}}  V^*_\ff, && V^*_\ff\xrightarrow{ L_1^\intercal } V^*_\bb,
\\
   & V_\ff \xrightarrow{ L_2}  V_\bb, && V_\bb\xrightarrow{(L_2)^{-1}} V_\ff, && V^*_\ff\xrightarrow{ (L_2^\intercal)^{-1} }  V^*_\bb, && V^*_\bb\xrightarrow{ L_2^\intercal }  V^*_\ff\, .
\end{align*}
For example, let $\hat s=s_a \hat \xi^a_\bb$ be a linear operator on the bosonic system. Then 
\begin{align}
L_2(\hat s)=s_a (L_2)^a{}_\alpha \hat\xi^\alpha_\ff,
\end{align}
is the linear fermionic operator associated to it by the identification map $L_1$. Analogously, if $\hat r=r_\alpha\hat \xi^\alpha_\ff$ is a fermionic operator, the identification map $L_1$ associates the bosonic operator
\begin{align}
L_1(\hat r)=r_\alpha (L_1)^\alpha{}_a\hat \xi^a_\bb
\end{align}
with it. In this sense, the identification maps always identify corresponding pairs of eigenmodes of the SUSY Hamiltonian with each other: If we diagonalize the SUSY Hamiltonian as
\begin{align}
\hat{\mathcal{Q}}^2=\sum_i \omega_i \left(\hat b_i^\dagger \hat b_i+\hat c_i^\dagger \hat c_i\right)
\end{align}
then, assuming that all $\omega_i$ are different, we always have
\begin{align}
L_1\left(\hat c_{i}\right) =\ee{\ii\phi_{i,1}}\hat b_i,\quad L_2\left(\hat b_i\right)=\ee{\ii\phi_{i,2}}\hat c_{i}
\end{align}
for all $i=1,...,N$, because of~\eqref{eq:J1_J2_definition}. 
Also, due to~\eqref{eq:J1J2_JfJb} the  complex phases are such that $\ee{\ii\phi_{i,1}} \ee{\ii\phi_{i,2}}=-\ii$, since $J_\bb(\hat b_i)=-\ii\hat b_i$ and $J_\ff(\hat c_i)=-\ii\hat c_i$ as follows from~\eqref{eq:GOmJ-relations} and~\eqref{eq:OmG-standard} (expressed in the complex bases).

\begin{example}
The supercharge operator $\hat{\mathcal{Q}}$ introduced in Example\ \ref{ex:SUSY-Hamiltonian} induces the rather simple identification maps represented by the matrices
\begin{align}
L_1\equiv \begin{pmatrix} 1&0\\0&1\end{pmatrix},\quad L_2\equiv \begin{pmatrix}0&1\\-1&0\end{pmatrix}.
\end{align}
Accordingly the Hermitian mode operators are identified as
\begin{equation}
\begin{aligned}
L_1(\hat \gamma)&=\hat q,& L_1(\hat \eta)&=\hat p\,, \\
L_2(\hat q)&=\hat \eta,& L_2(\hat p)&=-\hat \gamma\,.
\end{aligned}
\end{equation}
\end{example}

\subsection{Application: supersymmetric Kitaev chain}\label{sec:Kitaev_1D_example}

\begin{figure}[tb]
\begin{center}
  \includegraphics[width=.95\linewidth]{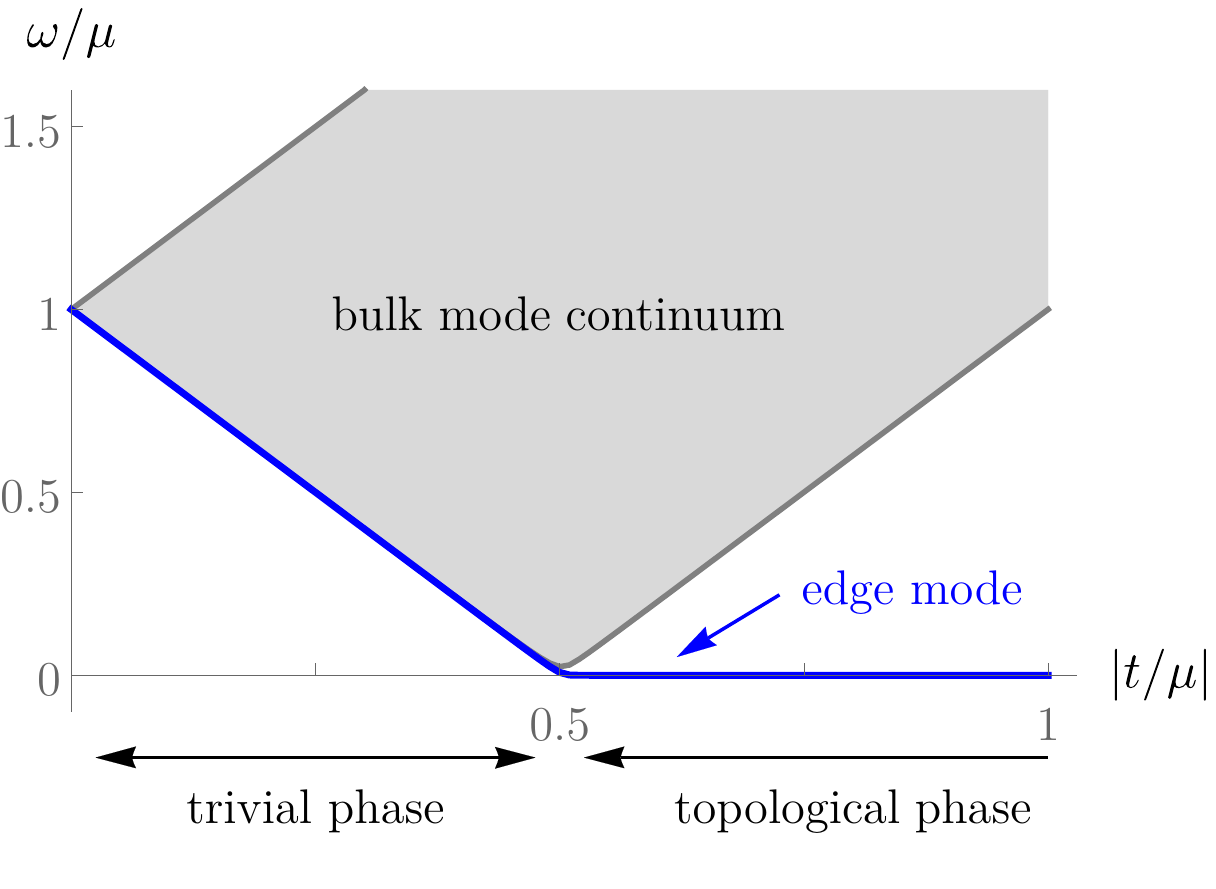}
\end{center}\vspace{-4mm}
\caption{Spectrum of the Kitaev chain with open ends. The system is in a trivial phase for $|t/\mu|<1/2$, critical at $t/\mu=\pm1/2$, and topological otherwise, with edge modes appearing. 
} 
\label{fig:kitaev_1d_spectrum} 
\end{figure}

In this section, we choose the well-known Kitaev chain\ \cite{kitaev2001unpaired} of $N$ sites with open boundary conditions as a concrete application for the formalism above and  investigate the physical properties of the identification maps. In our construction, the supersymmetric partner of the Kitaev chain resembles the Kane-Lubensky (KL) chain \cite{kane2014topological}. We  are interested in addressing the question: To what extent do the identification maps preserve the localization properties of operators, when mapping them from one system to its SUSY partner?

The form of the fermionic Kitaev chain Hamiltonian which we study is obtained by considering a real pairing, and setting its magnitude equal to the hopping ($t$) in the original model proposed in\ \cite{kitaev2001unpaired}:
\begin{align}\label{eq:Kitaev_SUSY_Hf}
    \hat H_\ff &=  \frac{\mu}2 \sum_{i=1}^N  \big(\hat c_i^\dagger \hat c_i- \hat c_i \hat c_i^\dagger\big)   
    +  t \sum_{i=1}^{N-1}  \big(\hat c_{i+1}^\dagger \hat c_i  - \hat c_{i+1}^\dagger \hat c_i^\dagger  +\mathrm{H.c.}\big),
\end{align}
where $\mu$ denotes the chemical potential. A supercharge which generates this Hamiltonian as the fermionic part of $\hat{\mathcal{Q}}^2=\hat H_\ff+\hat H_\bb$ is given by
\begin{align}\label{eq:Kitaev_1D_supercharge}
    \hat{\mathcal{Q}} &=  \sqrt\mu\sum_{i=1}^N   \hat c_i\hat b_i^\dagger  
    +\frac{t}{\sqrt\mu} \sum_{i=1}^{N-1} \left( \hat c_{i}\hat b_{i+1} + \hat c_{i}\hat b_{i+1}^\dagger \right) +\mathrm{H.c.}
    \nn&=  \sqrt{\mu}\sum_{i=1}^N\left( \hat\gamma_i \hat q_i + \hat\eta_i\hat p_i\right)+ \frac{2t}{\sqrt\mu} \sum_{i=1}^{N-1} \hat\gamma_{i+1}\hat q_i\,
\end{align}
Its  bosonic part resembles the KL chain, a well-studied model in topological mechanics \cite{kane2014topological}:
\begin{align}\label{eq:Kitaev_SUSY_Hb}
    &\hat H_\bb = \frac\mu2\sum_{i=1}^N \hat p_i^2  +\frac{4t^2+\mu^2}{2\mu} \sum_{i=2}^{N} \hat q_i^2 +\frac\mu2\hat q_1^2 +2t \sum_{i=1}^{N-1}\hat q_i\hat q_{i+1}
    \nn&=  \frac\mu2\left(\hat b_1\hat b_1^\dagger+\hat b_1^\dagger \hat b_1\right)
        + \sum_{i=2}^{N}\left[ \tfrac\mu2 \left(1+\tfrac{2t^2}{\mu^2}\right) \left(\hat b_i^\dagger\hat b_i+\hat b_i\hat b_i^\dagger\right) \right.
    \nn& \left.\qquad\qquad + t \left(\hat b_{i-1}\hat b_{i}^\dagger+\hat b_{i-1}\hat b_{i}\right) +\frac{t^2}\mu \left(\hat b_i\hat b_i\right) +\mathrm{H.c.}\right]\,.
\end{align}
Denoting the  energy eigenmodes of the system with primed operators, the SUSY Hamiltonian can be diagonalized as
\begin{align}\label{eq:1d_Kitaev_H_diagonal}
    \hat{\mathcal{Q}}^2=\hat H_\ff+\hat H_\bb = \sum_{i=1}^N \omega_i \left(\hat{b}'{}^\dagger_i\hat{b}'_i+\hat {c}'_i{}^\dagger \hat{c}'_i \right).
\end{align}

Figure~\ref{fig:kitaev_1d_spectrum} schematically shows the spectrum of the Kitaev chain, which is in a trivial phase for $|{t}/{\mu}|<1/2$ and in a topological phase otherwise. 
The bulk gap closes at the critical point ${t}/{\mu}=\pm1/2$ in the limit of large $N$. The trivial phase is featureless; all eigenmodes together form a bulk mode continuum. However, as the system enters the topological phase for $|{t}/{\mu}|>1/2$, an edge mode gradually separates from the continuum and stabilizes at zero energy (albeit with an exponentially small gap with $N$) as a telltale signature of the topological phase.
On the fermionic side, \ie for the Kitaev chain, the edge modes are localized at both ends of the chain.
In contrast, on the bosonic side, \ie for the KL chain, they are localized only at one end (here the left end) of the chain. 
For completeness, we mention that, in the KL chain, there exists a nonlinear zero mode (soliton) that can reverse the location of the edge mode \cite{chen2014nonlinear}; however, that falls beyond the ambit of the present setting. 
The localization of the edge mode at the boundaries of the chain is exponential, in the sense that, when writing the edge mode operator as $\hat c'_{N}=\sum_j \alpha_j\hat c_j+\beta_j\hat c_j^\dagger$ or $\hat b'_{N}=\sum_j \alpha_j\hat b_j+\beta_j\hat b_j^\dagger$, the quantities $|\alpha_j|^2$ and $|\beta_j|^2$ decay exponentially away from the concerned edge.

\begin{figure*}
\centering
    \subfloat[
    \label{fig:kitaev_1d_J1action_weak}
    ]
    {\includegraphics[width=.3\linewidth]{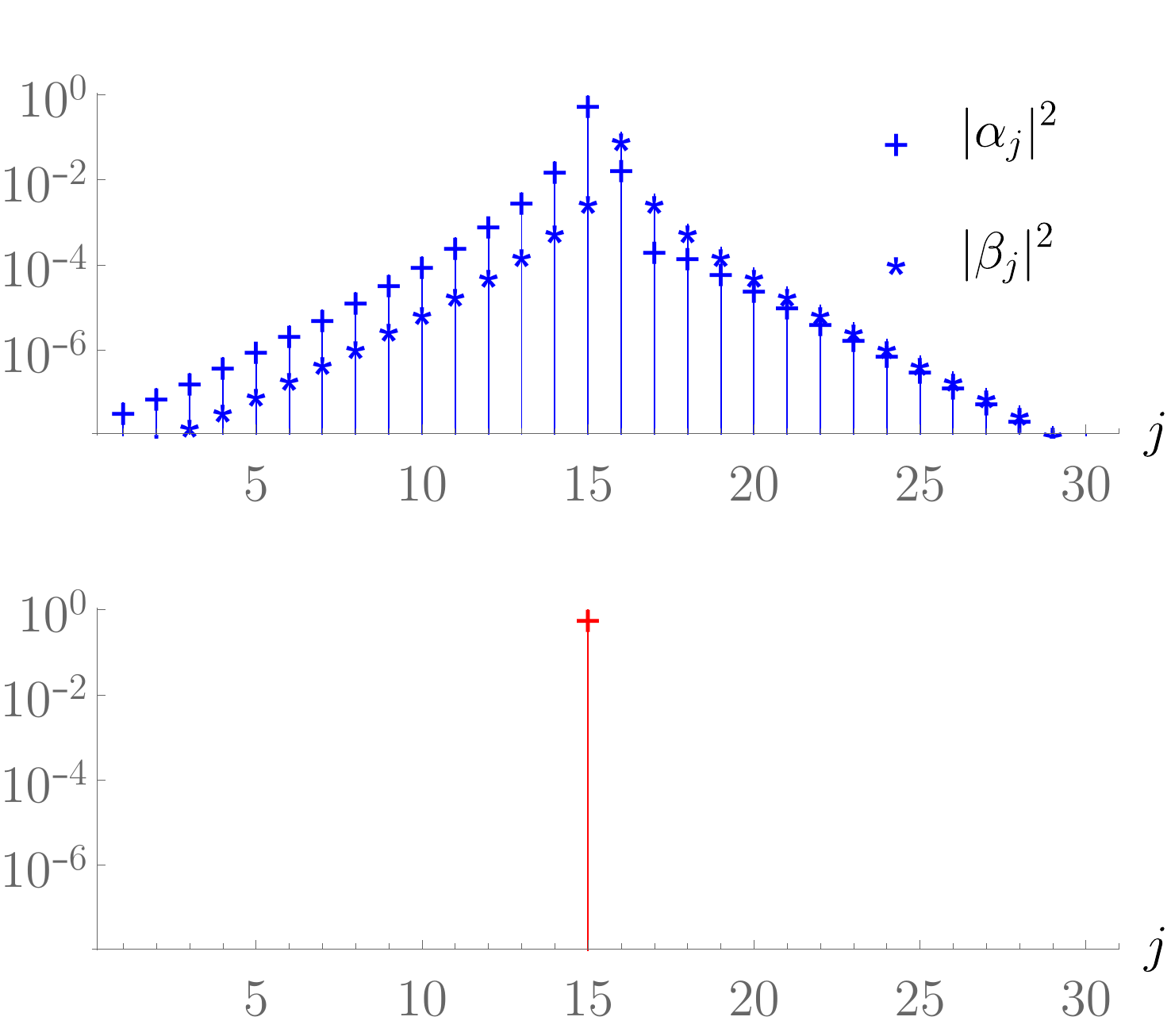}}
    \hspace{.03\linewidth}
    \subfloat[
    \label{fig:kitaev_1d_J1action_visualize}]{\includegraphics[width=.3\linewidth]
    {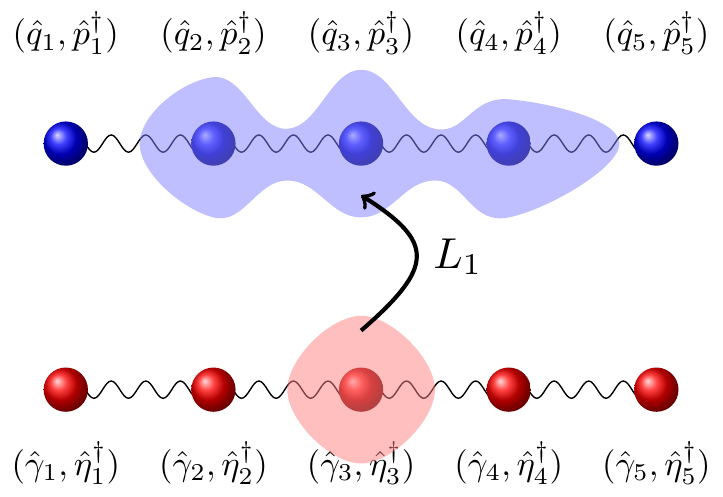}}
    \hspace{.03\linewidth}
    \subfloat[
    \label{fig:kitaev_1d_J1action_strong}]{\includegraphics[width=.3\linewidth]{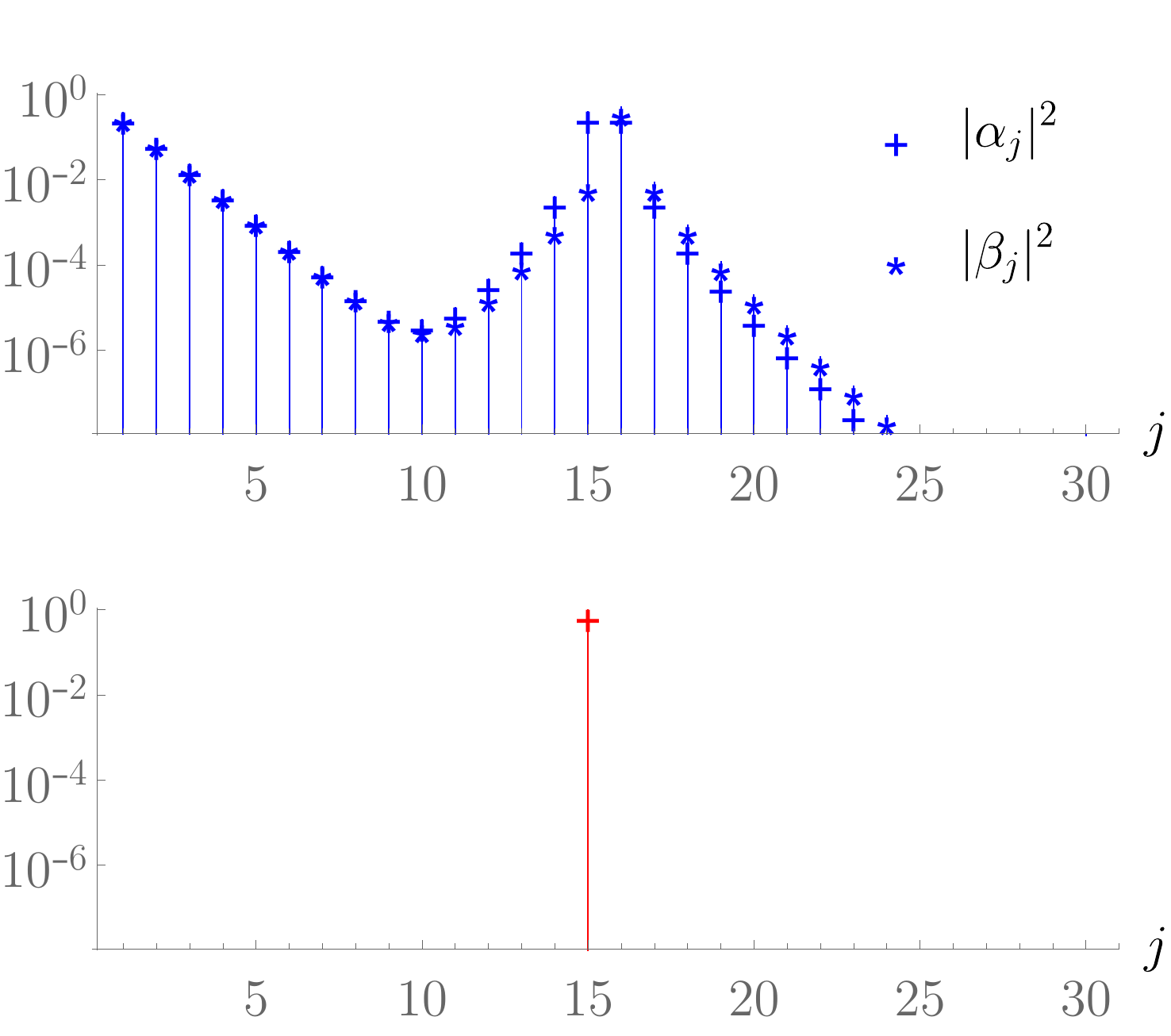}}
    \caption{Locality of the identification map $L_1$ and its dependence on the relative coupling $t/\mu$.
    As schematically visualized in (b),
    the plots show, for a system of $N=30$ modes, how $L_1$ associates the onsite operator $\hat c_{15}$ in the fermionic Kitaev chain \eqref{eq:Kitaev_SUSY_Hf} to the operator $L_1(\hat c_{15})=\sum_j\alpha_j\hat b_j+\beta_j\hat b_j^\dagger$ on the bosonic Kane-Lubensky chain \eqref{eq:Kitaev_SUSY_Hb}. 
    In the trivial phase, as plotted on the left in (a) for $t=0.35\mu$, the identification map preserves locality to a very high degree, namely, with an exponential decay of the coefficients $|\alpha_{j}|^2$ and $|\beta_{j}|^2$ with the distance $|k-j|$ (here $k\equiv 15$). 
    In the topological phase, as plotted on the right in (c) for $t=\mu$, the operator $L_1(\hat c_{k})$ can be non-local with a strong contribution from the boundary sites.
    }\label{fig:kitaev_1d_J1action}
\end{figure*}


The appearance and localization of the edge modes have consequences for the properties of the identification maps.
In particular, they affect to what extent the identification maps preserve the locality of the onsite observables in a system when mapping them onto its SUSY partner, as visualized in Fig.~\ref{fig:kitaev_1d_J1action}. 
From above, we know that the identification maps exactly map corresponding eigenmodes of the partner Hamiltonians to each other, and that we can choose the relative phase factor such that
\begin{align}\label{eq:kitaev_J1J2_choice}
    L_1(\hat{c}'_i)= \hat{b}'_i,\quad L_2(\hat{b}'_i)=-\ii \hat{c}'_i\,.
\end{align}
Thus, at the point ($t=0$), where the individual chain sites can be chosen as eigenmodes of the partner Hamiltonians, the identification maps exactly associate the fermionic and bosonic chain sites one-to-one,  maintaining their ordering. 

This feature of locality of the identification maps is conspicuous throughout the trivial phase, except the onsite localization at $t=0$ now transforms to an exponential one (with a length scale falling with the spectral gap), as seen in Fig.~\ref{fig:kitaev_1d_J1action_weak} for a chain of $N=30$ sites.
In detail, in the trivial phase, the identification maps associate single site operators $\hat c_i$ and $\hat b_i$ with operators $L_1(\hat c_k)=\sum_j \alpha_{kj}\hat b_j+\beta_{kj}\hat b_j$, such that $|\alpha_{kj}|^2$ and $|\beta_{kj}|^2$ decay exponentially in $|k-j|$. Likewise, in the trivial phase, $L_2$ maps onsite bosonic operators to exponentially localized fermionic operators.

In the topological phase, however, the identification maps develop non-local features, as can be seen in Fig.~\ref{fig:kitaev_1d_J1action_strong}. 
Here, a fermionic site operator $\hat c_k$ (\eg in the figure, $k=15$ in a chain of $N=30$ sites) when mapped to the operator $L_1(\hat c_{k})$ on the bosonic side, can acquire a significant component located at the left edge of the bosonic chain, which is the edge where also the bosonic edge mode is localized.
If we shift the original fermionic site to further right, the edge contribution to $L_1(\hat c_k)$  decays, and the localization of the resulting observable gains prominence. On the other hand, if we move the original site to the left, the edge contribution to $L_1(\hat c_k)$ starts dominating over the bulk coming from (bosonic) sites in the neighborhood of the $k$th site.

Instead of the map $L_1$, we may as well employ $L_2^{-1}$ to map the fermionic site operators to their bosonic counterparts. The observed behavior is similar; however, for $L_2^{-1}(\hat c_k)$, in the topological phase, the edge contribution at the left end of the bosonic chain dominates as $k\to N$, \ie when the original fermionic operator approaches the right end of the chain.

The converse association of bosonic onsite operators with the corresponding fermionic observables via the identification maps $L_2$ or $L_1^{-1}$ behaves very similar:
In the trivial phase, they are exponentially localized as above, and in the topological phase, they exhibit similar non-local features. 
However, here, both $L_2(\hat b_k)$ and  $L_1^{-1}(\hat b_k)$ develop a dominant edge contribution when the original bosonic operator $\hat b_k$ approaches the left end of the chain. For $L_2(\hat b_k)$, the edge contribution appears on the left edge of the fermionic chain; for $L_1^{-1}(\hat b_k)$, it appears on the right edge.

This example demonstrates that the identification between the bosonic and the fermionic parts of a SUSY Gaussian state via the identification maps may or may not coincide with an identification based intuitively on some underlying (lattice) geometry of the SUSY Hamiltonians.
Whereas we observe agreement in the trivial phase of the SUSY Kitaev chain, in the topological phase, the identification maps behave vastly  differently and disengage from notions based on the geometric intuition.
The duality relations of the next section will show that, whereas the geometrical appearance of modes can be distorted by the identification maps, their entanglement properties remain intimately related.

\section{Entanglement duality}\label{sec:entanglement-duality}
In this section, we derive how subsystem decompositions $V=A\oplus B$ behave under the supersymmetric identification maps $L_1$ and $L_2$, which leads to a duality between the bosonic and fermionic (mixed) Gaussian states. We can also use this to relate the associated entanglement entropies.

\subsection{Reduced Gaussian states and entanglement}\label{sec:entanglement_gaussian_states}
Given a classical phase space $V\simeq\mathbb{R}^{2N}$, a subspace $A\subset V$ defines a physical subsystem if the following condition is satisfied:
\begin{itemize}
    \item \textbf{Bosonic:} The restriction of $\Omega^{ab}$ to the subspace $A$ is non-degenerate, \ie has non-zero determinant.
    \item \textbf{Fermionic:} The subspace $A$ is even dimensional.
\end{itemize}
Note that the bosonic condition also implies that $A$ is even dimensional, as any anti-symmetric odd-dimensional matrix has a vanishing determinant.

In practice, we choose a basis $\hat{\xi}^a=(\hat{\xi}_A^a,\hat{\xi}^a_B)$ that splits $V=A\oplus B$ into a direct sum, where $B$ is the complementary system to $A$ defined as
\begin{equation}\label{eq:system_complement}
    \begin{aligned}
    B&=\begin{cases}
    \left\{v^a\in V\,\big|\, v^a \Omega^{-1}_{ab}u^b=0\,~\forall\, u^b\in A \right\}& \textbf{(bosons)}\\
    \left\{v^a\in V\,\big|\, v^a G^{-1}_{ab}u^b=0\,~\forall\, u^b\in A \right\}& \textbf{(fermions)}
    \end{cases}\,,
    \end{aligned}
\end{equation}
which is called the symplectic complement for bosons and the orthogonal complement for fermions.\footnote{Here, we used the inverse matrices $\Omega^{-1}$ and $G^{-1}$, which are bilinear forms on the phase space (rather than its dual). In~\cite{hackl2020geometry,hackl2020bosonic,windt2020local}, they are denoted by $\Omega^{-1}_{ab}\equiv \omega_{ab}$ and $G^{-1}_{ab}\equiv g_{ab}$.} 
We have the two bases $\hat{\xi}_A$ and $\hat{\xi}_B$ with $N_A+N_B=N$, such that the resulting matrix representations of $\Omega^{ab}$ and $G^{ab}$ take the forms
\begin{equation}
    \begin{aligned}
    \Omega^{ab}&\equiv\left(\begin{array}{cc|cc}
         & \id &&\\
    -\id     & &&\\
    \hline
    &&& \id\\
    &&-\id & 
    \end{array}\right) \equiv \left(\begin{array}{c|c}
        \Omega_A &  \\
        \hline
         & \Omega_B 
    \end{array}\right)\,,&&\textbf{(bosons)}\\
    G^{ab}&\equiv\left(\begin{array}{cc|cc}
    \id     &  &&\\
    & \id&&\\
    \hline
    &&\id& \\
    && & \id 
    \end{array}\right) \equiv \left(\begin{array}{c|c}
        G_A &  \\
        \hline
         & G_B 
    \end{array}\right)\,.&&\textbf{(fermions)}
    \end{aligned}
\end{equation}
Note that this implies that the restrictions $\Omega_A$ and $\Omega_B$, or $G_A$ and $G_B$, respectively, reproduce the standard forms from\ \eqref{eq:OmG-standard}, \ie the subsystems are themselves a bosonic or fermionic system consisting of $N_A$ and $N_B$ degrees of freedom.

When quantizing the subsystems $A$ and $B$, we can construct Fock spaces $\mathcal{H}_A$ and $\mathcal{H}_B$ as described in Sec.~\ref{eq:Gaussian-states}, such that the full Hilbert space is a tensor product $\mathcal{H}=\mathcal{H}_A\otimes\mathcal{H}_B$. In general, a pure Gaussian state $\ket{J}\in\mathcal{H}$ will itself not be a tensor product state with respect to this decomposition, which means that the subsystems are entangled.

It is well-known that the bipartite entanglement encoded in a general pure state $\ket{\psi}$ can be characterized by the spectrum of the mixed state $\rho_A=\Tr_{B}\ket{\psi}\bra{\psi}$ that results from tracing over $\mathcal{H}_B$. If $\ket{\psi}$ is a pure Gaussian state $\ket{J}$, the reduced state $\rho_A$ is a mixed Gaussian state. It can be expressed in terms of the linear complex structure $J$ as~\cite{hackl2020bosonic}
\begin{align}
    \rho_A=e^{-\hat{Q}}\,\,\,\text{with}\,\,\,\hat{Q}=\begin{cases}
    q_{rs}\hat{\xi}^r\hat{\xi}^s+c_0 & \normalfont{\textbf{(bosons)}}\\
    \ii q_{rs}\hat{\xi}^r\hat{\xi}^s+c_0 & \normalfont{\textbf{(fermions)}}
    \end{cases}\,,\label{eq:mixed_state}
\end{align}
with $q_{rs}$ is a  $2N_A$-by-$2N_A$ matrix\footnote{Not to confuse with the quadrature operator $\hat{q}_i$, which carries at most one index.} given by~\cite{hackl2020bosonic}
\begin{align}
\begin{split}
    q_{rs}&=\begin{cases}
	-\ii(\Omega^{-1}_A)_{rl}\,\mathrm{arccoth}\left(\ii J_A\right)^l{}_s & \textbf{(bosons)}\\[1mm]
	+\ii (G^{-1}_A)_{rl}\,\mathrm{arctanh}\left(\ii J_A\right)^l{}_s & \textbf{(fermions)}
	\end{cases}\\
	&=\begin{cases}
	+(\Omega^{-1}_A)_{rl}\,\mathrm{arccot}\left(J_A\right)^l{}_s & \textbf{(bosons)}\\[1mm]
	-(G^{-1}_A)_{rl}\,\mathrm{arctanh}\left(J_A\right)^l{}_s & \textbf{(fermions)}
	\end{cases}\,.\label{eq:J-q-relationship}
\end{split}
\end{align}
Here, $J_A$ is the restriction of $J$ to the $2N_A$-by-$2N_A$ subblock, representing the action of $J$ onto the subspace $A\subset V$, and similarly, $\Omega^{-1}_A$ and $G^{-1}_A$ denote the restrictions of $\Omega^{-1}$ and $G^{-1}$. The  coefficient $c_0$ is given by
\begin{align}
	c_0&=\begin{cases}
	\tfrac{1}{4}\log\det\left(\tfrac{\id+J^2_A}{4}\right) & \textbf{(bosons)}\\ -\tfrac{1}{4}\log\det\left(\tfrac{\id+J^2_A}{4}\right) & \textbf{(fermions)}
	\end{cases}\,.
\end{align}
It can be shown\ \cite{hackl2019minimal,hackl2020bosonic} that the eigenvalues of $J_A$ are purely imaginary and appear in $N_A$ conjugate pairs $\pm\ii\lambda_i$, where $\lambda_i\in[1,\infty)$ for bosons, and $\lambda_i\in[0,1]$ for fermions. 

These relations have the consequence that, for Gaussian states, the rather complicated spectrum of $\rho_A$ simplifies so that it can be efficiently calculated from the much simpler spectrum of $J_A$ given by $\pm\ii\lambda_i$.
Specifically, the eigenvalues of $\rho_A$ are
\begin{align}
   \hspace{-3mm}\mu(n_1,\dots,n_N)=\begin{cases}
    \left(\prod_{i=1}^{N_A}\frac{(\tanh r_i)^{n_i}}{\cosh r_i}\right)^2 & \textbf{(bosons)}\\
    \left(\prod_{i=1}^{N_A}\frac{(\tan{r_i})^{n_i}}{\sec{r_i}}\right)^2 & \textbf{(fermions)}
    \end{cases}\,,\label{eq:mu-spectra}
\end{align}
where $r_i=\frac{1}{2}\cosh^{-1}(\lambda_i)$, $n_i\in\mathbb{N}$ for bosons, and $r_i=\frac{1}{2}\cos^{-1}(\lambda_i)$, $n_i=0,1$ for fermions. 

The entanglement entropy $S_A(\ket{\psi})=S_B(\ket{\psi})$ is computed as the von Neumann entropy $S(\rho_A)$ of the reduced state $\rho_A$, namely,
\begin{align}
    S_A(\ket{\psi})=S(\rho_A)=-\Tr \rho_A\,\log{\rho_A}\,.
\end{align}
Calculating this quantity in practice is notoriously hard, as it requires computing the spectrum of $\rho_A$ that demands vast computational resources for large systems and appropriate approximations or truncation for infinite dimensional Hilbert spaces (in the case of bosons). However, if the state $\ket{\psi}$ happens to be a Gaussian state $\ket{J}$, we can exploit the relation between the spectra of $\rho_A$ and $J_A$ to find analytical formulas in terms of the restriction $J_A$ to the subsystem $A$. These restrictions correspond exactly to the symplectic or orthogonal decomposition $V=A\oplus B$ introduced at the beginning of this section. The formulas for the von Neumann entropies are given by\ \cite{sorkin1983entropy,peschel2003calculation} 
\begin{align}\label{eq:entropy_functions}
    &S(\rho_A)=\left\{\begin{array}{rl}
    \sum_{i=1}^{N_A} s_\bb(\lambda_i)& \textbf{(bosons)}\\
    \sum_{i=1}^{N_A} s_\ff(\lambda_i) & \textbf{(fermions)}
    \end{array}\right.\,,
\end{align}
with 
$s_{\bb}(x)=\left(\frac{x+1}{2}\right)\log\left(\frac{x+1}{2}\right)-\left(\frac{x-1}{2}\right)\log\left(\frac{x-1}{2}\right)$ for bosons, and 
$s_\ff(x)=-\frac{1+x}{2}\log\left(\frac{1+x}{2}\right)-\frac{1-x}{2}\log\left(\frac{1-x}{2}\right)$
for fermions,
which can be unified by the single trace formula\ \cite{bianchi2015entanglement,hackl2020bosonic}
\begin{align}
    S(\rho_A)=\frac{1}{2}\left|\Tr\left[\left(\frac{1+\ii J_A}{2}\right)\log\left(\frac{1+\ii J_A}{2}\right)^2\right]\right|\,.
\end{align}
The formula in~\eqref{eq:entropy_functions} can also be used to compute the Renyi entropy of order $n$ if we replace $s_\bb$ and $s_\ff$ by the respective Renyi entropy functions~\cite{hackl2020bosonic}:
\begin{align}\label{eq:renyi_formulas}
    r^{(k)}_\bb(\lambda)&=\frac{1}{k-1}\log\left((\tfrac{\lambda+1}{2})^k-(\tfrac{\lambda-1}{2})^k\right)\,,\\
    r^{(k)}_\ff(\lambda)&=-\frac{1}{k-1}\log\left((\tfrac{1+\lambda}{2})^k+(\tfrac{1-\lambda}{2})^k\right)\,.
\end{align}
It follows from the above that a subsystem $A$ (bosonic or fermionic) of a system in a pure Gaussian state is not entangled with the rest of the system, \ie it is in a product state with the rest of the system, if and only if $\lambda_i=1$ for all eigenvalues of $J_A$. In that case, we have $J_A^2=-\id_A$, and the subsystem is in a pure Gaussian state on its own. This is equivalent to $J(A)=A$, \ie  the full (unrestricted) linear complex structure mapping $A$ onto itself.

\subsection{Supersymmetric ground states and identification maps}
Above in~\eqref{eq:J1J2_JfJb}, we saw that $L_1$ and $L_2$ together encode the linear complex structures of both the bosonic and the fermionic part of the ground state \eqref{eq:groundstate_ket} of $\hat H$. 
In the following, we will use $L_1$ and $L_2$ to identify subsystems of fermionic modes with subsystems of bosonic modes, and vice versa. 

The maps $L_1$ and $L_2$ are the canonical choices for the identification maps because they preserve the Kähler structures of the fermionic ground state $\ket{J_\ff}$ and the bosonic ground  state $\ket{J_\bb}$.
That is, if we consider the fermionic 2-point function
\begin{align}
    C_{\ff,2}^{\alpha\beta}=\bra{J_\ff}\hat\xi^\alpha\hat\xi^\beta\ket{J_\ff}
    =\frac12\left(G_{\ff}^{\alpha\beta}+\ii\Omega_{\ff}^{\alpha\beta}\right)
\end{align}
and the bosonic
\begin{align}
    C_{\bb,2}^{ab}=\bra{J_\bb}\hat\xi^a\hat\xi^b\ket{J_\bb}
    =\frac12\left(G_{\bb}^{ab} +\ii\Omega_{\bb}^{ab}\right),
\end{align}
then one can show that we have
\begin{align}\label{eq:G_preserve}
    G_{\bb}^{ab}=(L_2)^a{}_\alpha G_{\ff}^{\alpha\beta} (L_2^\intercal)_\beta{}^b = (L_1^{-1})^a{}_\alpha G_{\ff}^{\alpha\beta} (L_1^{\intercal\, -1})_\beta{}^b\, ,
\end{align}
as well as (dropping the indices for a better readability)
\begin{align}\label{eq:Omega_preserve}
    \Omega_\ff= L_1\Omega_\bb L_1^\intercal= L_2^{-1}\Omega_\bb L_2^{-1\,\intercal}.
\end{align}
Thus,  the identification maps $L_1$ and $L_2$ preserve both the symmetric and the antisymmetric forms of the Kähler structure and exactly map the bosonic and fermionic 2-point functions of the ground state onto each other.
Interestingly, we  see that  it makes no difference  whether we use
$L_1$ and $(L_1)^{-1}$ or  $L_2$ and $(L_2)^{-1}$ for this purpose. The reason  is that both maps are closely related. In fact, since $J_\ff^2=-\id$ and $J_\bb^2=-\id$, it follows that
\begin{align}
\begin{split}
    (L_1)^{-1}&= -L_2J_\ff= -J_\bb L_2\,,\\
     (L_2)^{-1}&= -J_\ff L_1=-L_1J_\bb\,.
\end{split}
\end{align}

\subsection{Dual supersymmetric subsystems}

Since the identification maps $L_1$ and $L_2$  preserve the Kähler structures, subsystems in one part (bosonic/fermionic) of a supersymmetric Gaussian state can be identified with subsystems in the other part (fermionic/bosonic).

If $A\subset V_\bb$ corresponds to a bosonic subsystem, then both $L_1(A)$ and $L_2^{-1}(A)$ are even-dimensional subspaces of the fermionic phase space $V_\ff$; hence, they correspond to a fermionic subsystem, as defined in Sec.~\ref{sec:entanglement_gaussian_states}.
If, on the other hand, $A\subset V_\ff$ corresponds to a fermionic subsystem, then $L_2(A)$ and $L_1^{-1}(A)$ only correspond to a bosonic subsystem, if the restriction of $\Omega_\ff$ to $A$ is non-degenerate. Following \eqref{eq:Omega_preserve}, this condition ensures that $\Omega_\bb$ is non-degenerate as required for $L_2(A)$ and $L_1^{-1}(A)$ to yield a bosonic subsystem.

How does the subsystem which $A$ is mapped to depend on whether we use the identification map $L_1$ (and its inverse) or the map $L_2$?
If the subsystem $A$ is in a pure state, there is no difference; both identification maps identify $A$ with the same subsystem. For example, if $A\subset V_\bb$ is a bosonic subsystem which is in a pure state, then we have $J_\bb(A)=A$; thus,
\begin{align}
    L_1(A)=L_1(J_\bb(A))=L_2^{-1}(A).
\end{align}

However, for an entangled subsystem, we have $J(A)\neq A$ and are led to the following commutative diagram:
\begin{equation}\label{eq:comm_diagramm_subsystems}
\begin{tikzcd}
A_\bb \arrow[r, "L_1"] \arrow[leftrightarrow]{d}[left]{J_\bb} \arrow[leftarrow]{dr}[near start]{L_2} & A_\ff \arrow[d, "J_\ff", leftrightarrow] \arrow{dl}[near end]{L_2} \\ \tilde A_\bb \arrow{r}[below]{L_1} & \tilde A_\ff   
\end{tikzcd}
\end{equation}
Here, we have chosen  $A_\bb \subset V_\bb$ as a bosonic subsystem, defined $A_\ff=L_1(A_\bb)$ and denoted $\tilde A_\bb=J_\bb(A_\bb)$ and $\tilde A_\ff=J_\ff(A_\ff)$.

Whereas $A_\bb$ and $\tilde A_\bb$ ($A_\ff$ and $\tilde A_\ff$) define different bosonic (fermionic) subsystems, they are intimately related: $A_\bb \cup \tilde A_\bb$ ($A_\ff \cup \tilde A_\ff$) is the smallest subsystem containing $A_\bb$ ($A_\ff$) which is in a pure partial state,  \ie  shares no entanglement with the rest of the system.

Furthermore, $A_\bb$ ($A_\ff$) shares the same amount of entanglement with the rest of the system as does  $\tilde A_\bb$ ($\tilde A_\ff$). This follows from the fact that the restricted linear complex structures $J^\bb_{A_\bb}$ and $J^\bb_{\tilde A_\bb}$ ($J^\ff_{A_\ff}$ and $J^\ff_{\tilde A_\ff}$) have the same spectrum.
To see this, consider the decomposition of the phase space into the direct sum $V_\bb=A_\bb\oplus B_\bb$ according to \eqref{eq:system_complement}. We define by $P_{A_\bb}$ the projector onto $A$ with respect to this decomposition:
\begin{align}
P_{A_\bb}(A_\bb)=A_\bb,\quad P_{A_\bb}(B_\bb)=0.
\end{align}
The restriction of $J_\bb$ to $A_\bb$ is then
$
J^\bb_{A_\bb}=P_{A_\bb} J_\bb P_{A_\bb}.
$
Analogously, considering the decomposition $V_\bb=\tilde A_\bb\oplus \tilde B_\bb$, we find that the projector onto $\tilde A_\bb$ is
$
P_{\tilde A_\bb}= - J_\bb P_{A_\bb} J_\bb,
$
and 
\begin{align}
    J^\bb_{\tilde A_\bb}= P_{\tilde A_\bb}J_\bb P_{\tilde A_\bb}=-J_\bb J^\bb_{A_\bb} J_\bb\,.
\end{align}
Since  $J_\bb^{-1}=-J_\bb$, $J^\bb_{\tilde A_\bb}$ and $J^\bb_{\tilde A_\bb}$ are represented by similar matrices and, hence, have the same spectrum.
In fact,  if $v\in A_\bb$ is an eigenvector of $J^\bb_{A_\bb}$ with $J^\bb_{A_\bb} v=\pm \ii \lambda v$, then $J_\bb v$ is an eigenvector of $J^\bb_{\tilde A_\bb}$ with the same eigenvalue.

\subsection{Duality for Gaussian states and their entanglement}
In the previous section, we analyzed the structure of subsystems in supersymmetric Gaussian states. In particular, we discussed how the identification maps $L_1$ and $L_2$ relate bosonic subsystems to fermionic subsystems and vice versa. We can now use this background structure to derive the following duality between bosonic and fermionic Gaussian states.

The setting is as follows. We consider a classical phase space $V\simeq \mathbb{R}^{2N}$ with Kähler compatible structures $(G,\Omega,J)$ and a choice of a subspace $A\subset V$ with $\dim A=2N_A$. We can associate two distinct quantum theories, namely, a bosonic Hilbert space $\mathcal{H}_\bb$ with Gaussian state $\ket{J}_\bb$ and a fermionic Hilbert space $\mathcal{H}_\ff$ with Gaussian state $\ket{J}_\ff$. In both quantum theories, we can construct a reduced density operator $\rho_A$ whose spectrum is determined by the restricted complex structure.

Crucially, however, the restriction of $J$ to $A$ is different depending on whether we consider a bosonic system and use a symplectic decomposition of the phase space or consider a fermionic system and use an orthogonal decomposition, according to \eqref{eq:system_complement}.
This is due to the fact that the $2N_A$-by-$2N_A$ subblock of the matrix $J$ associated to the subspace $A$ depends also on the basis elements that are not contained in $A$. In particular, we choose two different bases for the bosonic and fermionic case $\hat{\xi}_\bb=(\hat{\xi}^A_\bb,\hat{\xi}^B_\bb)$ and $\hat{\xi}_\ff=(\hat{\xi}^A_\ff,\hat{\xi}^B_\ff)$, such that
\begin{align}
    \mathrm{span}(\hat{\xi}^A_\bb)&=A=\mathrm{span}(\hat{\xi}^A_\ff)\,,\\
    \mathrm{span}(\hat{\xi}^B_\bb)&=B_\bb\neq B_\ff=\mathrm{span}(\hat{\xi}^B_\ff)\,,
\end{align}
where $B_\bb$ and $B_\ff$ are the respective bosonic and fermionic complements defined in~\eqref{eq:system_complement}. 
Consequently, the restrictions of $J$ to the subspace $A$ can be different on the bosonic and the fermionic side, which therefore are denoted by $J_A^\bb$ and $J_A^\ff$, respectively.
Equipped with this, we can now prove the following proposition.

\begin{proposition}[Entanglement duality]\label{prop:entanglement-duality}
We consider a supersymmetric system with phase space $V\simeq V_\bb\simeq V_\ff$ equipped with Kähler structures $(G,\Omega,J)$, which simultaneously describe a bosonic and a fermionic Gaussian state, namely, $\ket{J}_\bb\in\mathcal{H}_\bb$ and $\ket{J}_\ff\in\mathcal{H}_\ff$. We now choose a subsystem $A\subset V$. This leads to two inequivalent decompositions of $V$, namely, $V=A\oplus B_\bb$ and $V=A\oplus B_\ff$, where the complementary subsystems $B_\bb$ and $B_\ff$ are defined in~\eqref{eq:system_complement}. The associated reduced states $\rho_A^\bb$ and $\rho_A^\ff$ are both Gaussian and fully described by the restricted complex structure $J_A^\bb$ and $J_A^\ff$, respectively, which satisfy the following relation:
\begin{align}
    J_A^\ff=-(J_A^\bb)^{-1}\,.\label{eq:entanglement-duality}
\end{align}
This implies that the eigenvalues $\pm\ii\lambda_i^\bb$ of $J_A^\bb$ are related to the eigenvalues $\pm\ii\lambda_i^\ff$ of $J_A^\ff$ via $\lambda_i^\bb=1/\lambda^\ff_i$.
\end{proposition}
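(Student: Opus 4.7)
The plan is to rewrite the claim as the operator identity $J_A^\bb \circ J_A^\ff = -\id_A$ on $A$, and to prove this by a direct geometric argument using the two complementary subspaces. Under the identification $V\simeq V_\bb \simeq V_\ff$, the restricted complex structures are $J_A^\bb = P_\bb J|_A$ and $J_A^\ff = P_\ff J|_A$ as endomorphisms of $A$, where $P_\bb$ and $P_\ff$ are the linear projectors onto $A$ along $B_\bb$ and $B_\ff$, respectively. The strategy is to combine $J^2=-\id$ on $V$ with the interplay between these two projectors.

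The single nontrivial geometric input is the lemma $B_\bb = J(B_\ff)$. I would derive it by inverting the Kähler compatibility $\Omega^{ab} = J^a{}_c G^{cb}$ to obtain $(\Omega^{-1})_{ab} = -(G^{-1})_{ac}J^c{}_b$, and hence a scalar identity of the form $\omega(v,u) \propto g(Jv,u)$ between the bilinear forms that define the two complements in~\eqref{eq:system_complement}. Consequently, $v\in B_\bb$ iff $\omega(v,u)=0$ for all $u\in A$, iff $g(Jv,u)=0$ for all $u\in A$, iff $Jv\in B_\ff$, which proves the lemma.

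The main calculation is then short. Given $v\in A$, decompose $Jv=a+b$ with $a\in A$ and $b\in B_\ff$, so that $J_A^\ff v = a$. Applying $J$ once more and using $J^2=-\id$ gives
\begin{equation*}
-v \;=\; J(Jv) \;=\; Ja + Jb,
\end{equation*}
where by the lemma $Jb\in B_\bb$. Further expanding $Ja=a'+b'$ along $A\oplus B_\bb$ with $a'\in A$ and $b'\in B_\bb$, we have $J_A^\bb a = a'$. Substituting yields $-v = a' + (b'+Jb)$ with the second summand in $B_\bb$; since $-v\in A$, uniqueness of the direct-sum decomposition $V=A\oplus B_\bb$ forces $a'=-v$, i.e.\ $J_A^\bb J_A^\ff v = -v$ for all $v\in A$. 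This is exactly $J_A^\ff = -(J_A^\bb)^{-1}$, and inverting purely imaginary eigenvalues under $x\mapsto -x^{-1}$ gives $\lambda_i^\ff=1/\lambda_i^\bb$.

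The main obstacle is the translation of the Kähler relation $\Omega=JG$ into the scalar identity between $\omega$ and $g$, which demands care with the upper/lower index positions and with the sign coming from $J^{-1}=-J$; once that is handled, the rest is a clean bookkeeping argument. A pleasant byproduct is that both $J_A^\bb$ and $J_A^\ff$ are automatically invertible whenever $A$ forms a valid subsystem for both statistics, since their composition is $-\id_A$.
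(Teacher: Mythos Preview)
Your proof is correct and follows essentially the same route as the paper's: both reduce to showing that the composition of the two restrictions is $-\id_A$ by applying $J^2=-\id$ and using the K\"ahler relation to connect the symplectic and orthogonal complements. The only cosmetic differences are that you isolate the key geometric fact as the lemma $B_\bb=J(B_\ff)$ and then decompose vectors directly, whereas the paper establishes the equivalent statement $P_\ff J\bar P_\bb=0$ via the inner-product identity $G^{-1}(\cdot,J\cdot)=-\Omega^{-1}(\cdot,\cdot)$, and the paper computes $J_A^\ff J_A^\bb$ rather than your $J_A^\bb J_A^\ff$.
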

\begin{proof}
The decompositions $V=A\oplus B_\bb$ and $V=A\oplus B_\ff$ define projectors, such that $P_\bb: V\to A$, $P_\ff: V\to A$, $\bar{P}_\bb: V\to B_\bb$, and $\bar{P}_\ff: V\to A_\ff$, such that $\id=P_\bb+\bar{P}_\bb=P_\ff+\bar{P}_\ff$. The restricted complex structures are then defined as
\begin{align}
    J_A^\bb&=P_\bb J|_A: A\to A\,,\\
    J_A^\ff&=P_\ff J|_A: A\to A\,.
\end{align}
We need to show $J_A^\ff=-(J_A^\bb)^{-1}$, which is equivalent to $J_A^\ff J_A^\bb=-\id_A$.
To show the latter, we take a vector $a\in A$ and calculate
\begin{align}
\begin{split}
    -a &= -P_\ff a=P_\ff J^2 a  =P_\ff J (P_\bb+\bar{P}_\bb) J  a \\
    &= P_\ff J J^\bb_A a + P_\ff J \bar{P}_\bb J a= J_A^\ff J_A^\bb a+  P_\ff J \bar{P}_\bb J a\,.\label{eq:relation1}
\end{split}
\end{align}
The second term in~\eqref{eq:relation1} vanishes since, for an arbitrary vector $v\in V$, the inner product
\begin{align}
\begin{split}
    G^{-1}(v,P_\ff J\bar{P}_\bb a)&=G^{-1}(P_\ff v,J\bar{P}_\bb Ja)\\
    &=-\Omega^{-1}(\underbrace{P_\ff v}_{\in A},\underbrace{\bar{P}_\bb Ja}_{B_\bb})=0\,,\label{eq:zero}
\end{split}
\end{align}
where we have used the relationship $G^{-1}(\cdot,J\cdot)=-\Omega^{-1}(\cdot, \cdot)$ following from~\eqref{eq:GOmJ-relations}. In matrix notation, we would write $G^{-1}(v,w)=(G^{-1})_{ab}v^aw^b$ and so on. That the inner product $\Omega^{-1}(\cdot, \cdot)$ in \eqref{eq:zero} vanishes follows from the definition of $B_\bb$ in~\eqref{eq:system_complement}, and therefore, proves the identity in \eqref{eq:entanglement-duality}. 
\end{proof}

At first glance, this result is a simple statement about restricting a complex structure $J: V\to V$ to a subspace $A\subset V$ in two inequivalent ways. However, its application to bosonic and fermionic Gaussian states implies a rather complicated relationship of the spectra $\rho_A^\bb$ and $\rho_A^\ff$ via~\eqref{eq:mu-spectra} and~\eqref{eq:entanglement-duality}, which can be made precise in the following corollary relating the restricted complex structures of the dual subsystems.

\begin{corollary}

Given a supersymmetric system with supercharge operator $\hat{\mathcal{Q}}$, we have a supersymmetric ground state $\ket{J_\bb}\otimes\ket{J_\ff}$ of $\hat{H}=\hat{\mathcal{Q}}^2$ and identification maps $L_1: V_\bb\to V_\ff$, $L_2: V_\ff\to V_\bb$, and their inverses $L_1^{-1}$ and $L_2^{-1}$, as above. Then Proposition~\ref{prop:entanglement-duality} implies the following.\\
Let $S\subset V_\bb$ be a bosonic subsystem and $L(S)\subset V_\ff$, with $L=L_1$ or $L=L_2^{-1}$, be a dual fermionic subsystem. Then the restricted linear structures $J^\bb_S:S\to S$ and $J^\ff_{L(S)}:L(S)\to L(S)$ of these two subsystems are such that
\begin{align}
\begin{split}
   (J^\bb_S)^{-1} &= - L^{-1} J^\ff_{L(S)} L\,, \\
    (J^\ff_{L(S)})^{-1} &= - LJ^\bb_S L^{-1}\, .
\end{split}
\end{align}
Let $R\subset V_\ff$ be a fermionic subsystem and $L(R)\subset V_\bb$, with $L=L_2$ or $L=L_1^{-1}$, be a dual bosonic subsystem.
Then the restricted linear structures $J^\ff_R:R\to R$ and $J^\bb_{L(R)}:L(R)\to L(R)$ of these two subsystems are such that
\begin{align}
\begin{split}
    (J^\ff_R)^{-1} &= - L^{-1} J^\bb_{L(R)} L\,, \\ 
   ( J^\bb_{L(R)})^{-1} &= - L J^\ff_R L^{-1}\, .
\end{split}
\end{align}
Thus, the eigenvalues of the dual restricted complex structures are inverses of each other  as implied by~\eqref{eq:entanglement-duality}, and their entanglement spectra are accordingly related by~\eqref{eq:mu-spectra}. 
\end{corollary}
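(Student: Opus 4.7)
The plan is to reduce the corollary to a direct application of Proposition~\ref{prop:entanglement-duality} by using the supersymmetric identification maps to regard the bosonic and fermionic phase spaces as a single Kähler space carrying a single subsystem but two inequivalent complementary decompositions.

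First, I would verify that each of the four candidate identification maps $L_1, L_2^{-1}: V_\bb\to V_\ff$ and $L_2, L_1^{-1}: V_\ff\to V_\bb$ is a Kähler isomorphism, i.e., intertwines the full triples $(G,\Omega,J)$ on the two sides. Preservation of $G$ and $\Omega$ is already the content of Eqs.~\eqref{eq:G_preserve} and~\eqref{eq:Omega_preserve}. Intertwining of $J$ follows from the algebraic identities $L_1 L_2 = J_\ff$ and $L_2 L_1 = J_\bb$ in~\eqref{eq:J1J2_JfJb}: for instance, $L_1 J_\bb = L_1(L_2 L_1) = (L_1 L_2)L_1 = J_\ff L_1$, so that $L_1^{-1} J_\ff L_1 = J_\bb$; the three other maps are treated identically.

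Next, I would fix for concreteness the bosonic-subsystem case with $L = L_1$. Because $L_1$ is a Kähler isomorphism, the triple $(G_\ff,\Omega_\ff,J_\ff)$ pulled back by $L_1$ coincides on $V_\bb$ with $(G_\bb,\Omega_\bb,J_\bb)$. Under this pullback, the subspace $L_1(S)\subset V_\ff$ corresponds to $S\subset V_\bb$, and the $G_\ff$-orthogonal complement of $L_1(S)$ pulls back to the $G_\bb$-orthogonal complement of $S$ by the isometry property. The symplectic complement of $S$ in $V_\bb$ used for the bosonic reduction is in general a different subspace. We are now exactly in the setting of Proposition~\ref{prop:entanglement-duality}: one Kähler space $V_\bb$, one subspace $S$, and two inequivalent decompositions according to~\eqref{eq:system_complement}. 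The proposition then yields $J^{\ff,\mathrm{pb}}_S = -(J^\bb_S)^{-1}$, where $J^{\ff,\mathrm{pb}}_S: S\to S$ denotes the restriction taken with respect to the orthogonal decomposition on $V_\bb$.

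To conclude, I would translate this relation back to $V_\ff$. Since the complex structures are intertwined, $J_\ff = L_1 J_\bb L_1^{-1}$, and the orthogonal complements correspond under $L_1$, the two restriction operators intertwine as $J^\ff_{L_1(S)} = L_1\, J^{\ff,\mathrm{pb}}_S\, L_1^{-1}$. Substituting the duality from Proposition~\ref{prop:entanglement-duality} gives $(J^\bb_S)^{-1} = -L_1^{-1}\, J^\ff_{L_1(S)}\, L_1$, and inverting yields the companion formula $(J^\ff_{L_1(S)})^{-1} = -L_1\, J^\bb_S\, L_1^{-1}$. The case $L = L_2^{-1}$ and the fermionic-subsystem versions with $L = L_2$ or $L = L_1^{-1}$ follow by replacing $L_1$ with the appropriate Kähler isomorphism and repeating the argument verbatim. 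The main obstacle is the careful bookkeeping that under a Kähler isomorphism both types of complements (symplectic and orthogonal) transform equivariantly, so that the projections defining the restrictions of $J$ are genuinely conjugated by the identification map; once this equivariance is cleanly in hand, the corollary is essentially a transport of Proposition~\ref{prop:entanglement-duality} across the isomorphism.
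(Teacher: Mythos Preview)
Your proposal is correct and matches the paper's intended reasoning. In the paper the corollary is stated without a separate proof, being presented as an immediate consequence of Proposition~\ref{prop:entanglement-duality} via the preceding facts that the identification maps preserve $G$ and $\Omega$ (Eqs.~\eqref{eq:G_preserve}--\eqref{eq:Omega_preserve}) and intertwine the complex structures (Eq.~\eqref{eq:J1J2_JfJb}); your write-up simply spells out this transport-of-structure argument explicitly, which is exactly the implicit content.
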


While our result applies to any identification where a bosonic and a fermionic phase space are related, supersymmetric systems with the identification maps $L_1$ and $L_2$, as discussed in Sec.~\ref{sec:susy-identification}, are the prime examples where such an identification is naturally chosen. 

The entanglement duality implies an intimate relation of the entanglement entropy of a subsystem with that of its dual subsystem because both the von Neumann entropy~\eqref{eq:entropy_functions}, as well as the Renyi entropies~\eqref{eq:renyi_formulas} are functions of the spectrum of the restricted complex linear structure.
For the simplest possible case, where the subsystems each consist of a single mode only, Fig.~\ref{fig:sf_sb_dual} shows the relation between the von Neumann entropy of the fermionic mode and the bosonic mode. Here, the restricted complex structures have one pair of imaginary eigenvalues, $\pm\ii\lambda$ for the fermionic   and $\pm\ii \lambda^{-1}$ for the bosonic system, which with the formula for the von Neumann entropies~\eqref{eq:entropy_functions} yields the relation plotted in Fig.~\ref{fig:sf_sb_dual}.

Evidently, the bosonic and the fermionic entanglement become asymptotically equal when the corresponding modes approach a pure partial state, and consequently, the entanglement approaches zero ($\lambda \rightarrow 1$). In the opposite direction, however, the entanglement in the bosonic mode grows without a bound as $\lambda\to0$, whereas the entanglement in the dual fermionic mode tends to saturate at the maximal value of $\log{2}$.

\begin{figure}
\begin{center}
  \includegraphics[width=.9\linewidth]{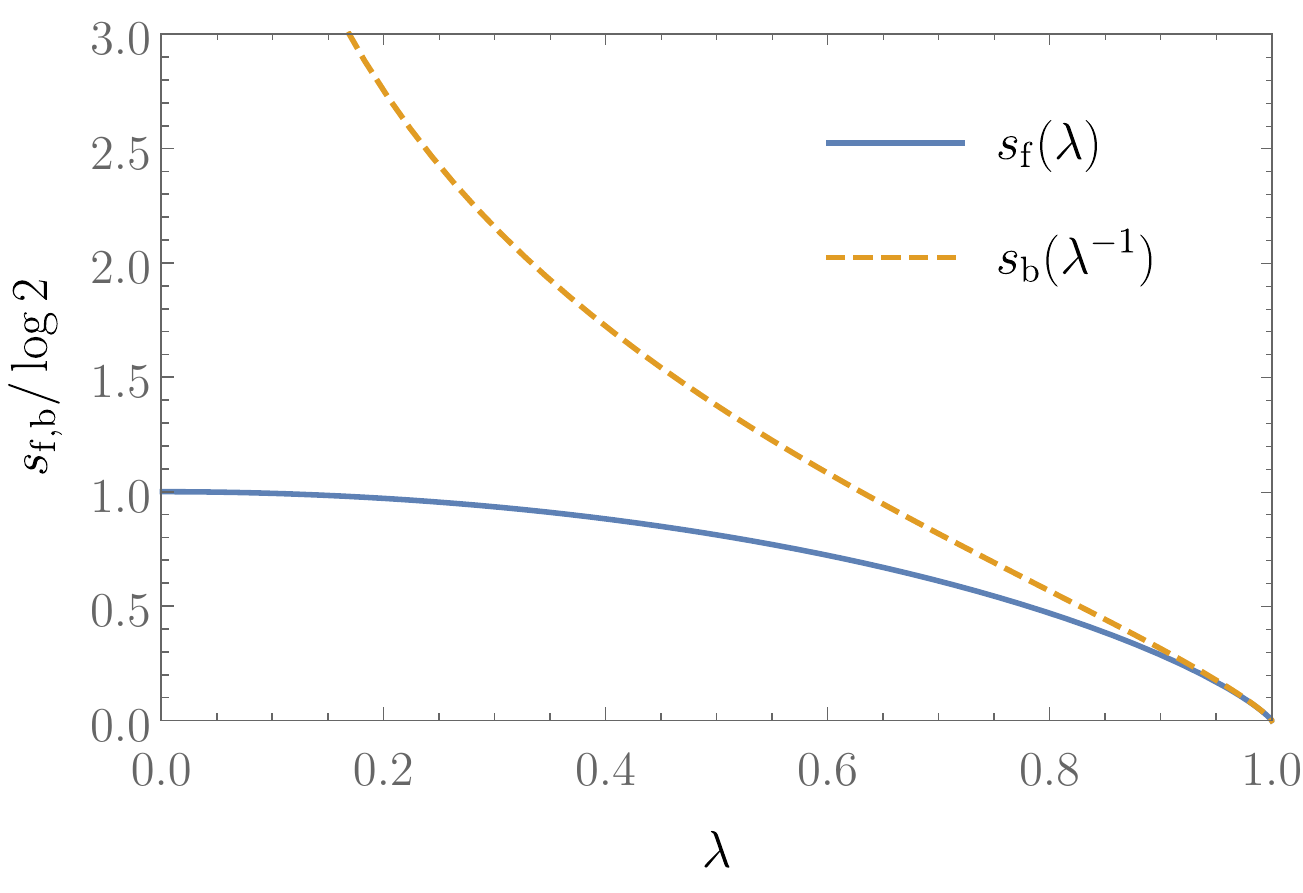}
\end{center}\vspace{-5mm}
\caption{Entanglement entropies for dual subsystems each consisting of a single mode.
The solid line shows the entanglement entropy of a single fermionic mode for which the restricted complex structure has eigenvalues $\pm\ii\lambda$.
Due to the entanglement duality \eqref{eq:entanglement-duality}, the restricted  complex structure of the dual bosonic mode has eigenvalues $\pm\ii\lambda^{-1}$, and the dashed line plots the resulting entanglement entropy.
} 
\label{fig:sf_sb_dual} 
\end{figure}

This relation between the SUSY partner single modes readily extends to multiple modes because, as is evident from \eqref{eq:entropy_functions}, the total entanglement entropy of a subsystem is given by the sum of the entanglement entropies over the individual normal modes of that subsystem. This is related to the fact, that a mixed Gaussian state always can be expressed as the product state of its normal modes, which are given by the eigenmodes of the restricted linear complex structure \cite{hackl2019minimal}. As a consequence of the entanglement duality, the identification maps identify normal modes with reciprocal eigenvalues of the restricted complex linear structures.

At this stage, it is an important question whether the entanglement duality is merely an interesting observation or to what extent  it matters for physical systems. 
In the following two sections, we therefore  investigate the entanglement duality in two concrete applications. First, we consider the toy model of a supersymmetric system with two bosonic and two fermionic modes in Sec.~\ref{sec:two_mode_duality} before we then move on to the recently proposed SUSY Kitaev honeycomb model in Sec.~\ref{sec:Kitaev_2D}.

Before proceeding, we note that mixed states that arise as a thermal state of a supersymmetric Hamiltonian also come under the ambit of our duality. In detail, we ask if a relation, like what applies to the mixed states arising from a reduction of a pure state to a subsystem $A$, holds for a thermal state of a supersymmetric Hamiltonian $\hat{H}=\hat{\mathcal{Q}}^2=\hat{H}_\bb+\hat{H}_\ff$, \ie $\rho=\frac{1}{Z}e^{\beta\hat{H}}=\frac{1}{Z}e^{\beta\hat{H}_\bb}\otimes e^{\beta\hat{H}_\ff}$. The following proposition answers this question in the affirmative.

\begin{proposition}[Thermal state duality]\label{prop:thermal-duality}
We consider a supersymmetric system with phase space $V\simeq V_\bb\simeq V_\ff$ equipped with Kähler structures $(G,\Omega,J)$, for which we have a Hamiltonian $\hat{H}=\hat{\mathcal{Q}}^2=\hat{H}_\bb+\hat{H}_\ff$. The thermal state $\rho=\frac{1}{Z}e^{\beta\hat{H}}$ at inverse temperature $\beta$ is a tensor product of two Gaussian states $\rho=\rho_\bb\otimes\rho_\ff$ with associated restricted complex structures $J_\bb$ and $J_\ff$ related by
\begin{align}
    J_\bb=-(J_\ff)^{-1}\,,\label{eq:thermal-state-duality}
\end{align}
which exactly resembles the entanglement duality, but now applies to the whole system. This implies that the eigenvalues $\pm\ii\lambda_i^\bb$ of $J_\bb$ are related to the eigenvalues $\pm\ii\lambda_i^\ff$ of $J_\ff$ via $\lambda_i^\bb=1/\lambda^\ff_i$.
\end{proposition}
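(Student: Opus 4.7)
The plan is to reduce Proposition~\ref{prop:thermal-duality} to a single-mode calculation carried out in the SUSY normal-mode basis.

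First, since $\hat H_\bb$ and $\hat H_\ff$ act on distinct tensor factors of the Hilbert space they commute, and the thermal state factorizes as $\rho=\rho_\bb\otimes\rho_\ff$ with $\rho_\bb\propto e^{-\beta \hat H_\bb}$ and $\rho_\ff\propto e^{-\beta\hat H_\ff}$. Because both Hamiltonians are quadratic, each factor is itself a Gaussian mixed state, fully encoded in its complex structure $J_\bb$ or $J_\ff$ via the formalism of Sec.~\ref{sec:review-Gaussian-states}.

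Second, I would pass to the normal-mode basis in which the Lie generators $K_\bb$ and $K_\ff$ take the common block-diagonal form \eqref{eq:spectrumkb} with one-particle frequencies $\{\omega_i\}$. By the SUSY identification maps $L_1,L_2$ of Sec.~\ref{sec:susy-identification}, a single basis of $V_\bb\simeq V_\ff$ can be chosen in which $K_\bb$ and $K_\ff$ are represented by the same matrix; this aligns bosonic and fermionic normal modes pairwise. The thermal state then decouples as $\rho_\bb\otimes\rho_\ff=\bigotimes_i(\rho_\bb^{(i)}\otimes\rho_\ff^{(i)})$, reducing the claim to a per-mode statement.

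Third, for each mode I would read off the restricted complex structure by matching the known thermal weights against the Gaussian spectrum formula \eqref{eq:mu-spectra}. The bosonic eigenvalues $\mu_n=(1-e^{-\beta\omega})e^{-\beta\omega n}$ give $\tanh r=e^{-\beta\omega/2}$ and hence $\lambda_\bb=\cosh(2r)=\coth(\beta\omega/2)$; the Fermi--Dirac populations give analogously $\lambda_\ff=\cos(2r)=\tanh(\beta\omega/2)$. The identity $\coth(x)\tanh(x)=1$ then yields $\lambda_\bb^{(i)}\lambda_\ff^{(i)}=1$ for every mode. In the shared normal-mode basis both $J_\bb$ and $J_\ff$ are $2\times 2$ block-diagonal with each block proportional to the canonical antisymmetric form $\Omega_0=\bigl(\begin{smallmatrix}0 & 1 \\ -1 & 0\end{smallmatrix}\bigr)$, the coefficients being $\lambda_\bb^{(i)}$ and $\lambda_\ff^{(i)}$. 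Using $\Omega_0^{-1}=-\Omega_0$, one obtains $J_\bb^{(i)}=-\bigl(J_\ff^{(i)}\bigr)^{-1}$ block-by-block, and assembling the blocks gives the basis-independent identity \eqref{eq:thermal-state-duality}.

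The main obstacle is the consistent pairing of bosonic and fermionic normal modes, which is precisely what the SUSY identification maps accomplish by carrying eigenvectors of $K_\bb$ to eigenvectors of $K_\ff$ with the same eigenvalue. A secondary caveat concerns zero modes, which would render $J_\ff$ singular and formally invalidate the inverse in \eqref{eq:thermal-state-duality}; these do not arise under the standing positivity assumption on $h^\bb_{ab}$ used throughout Sec.~\ref{sec:review-Gaussian-states}, and otherwise must be excluded or treated as the obvious limiting case.
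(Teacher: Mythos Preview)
Your argument is correct. The paper's own proof follows a slightly different but closely related route: rather than going to the normal-mode basis and matching the thermal Boltzmann/Fermi--Dirac weights against the spectrum formula~\eqref{eq:mu-spectra}, the paper reads off $q^\bb=\tfrac{\beta}{2}h^\bb$ and $q^\ff=\tfrac{\beta}{2}h^\ff$ directly from the exponential form $\rho=e^{-\hat Q}$ in~\eqref{eq:mixed_state}, and then inverts the matrix relation~\eqref{eq:J-q-relationship} to obtain $J_\bb=-\cot(K/2)$ and $J_\ff=\tan(K/2)$ as functions of the common generator $K$; the identity $\cot\cdot\tan=1$ then yields~\eqref{eq:thermal-state-duality} basis-free in one line. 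Your approach is the eigenvalue-by-eigenvalue version of the same computation (indeed $-\cot(\pm\ii\beta\omega/2)=\pm\ii\coth(\beta\omega/2)$ and $\tan(\pm\ii\beta\omega/2)=\pm\ii\tanh(\beta\omega/2)$ reproduce exactly your $\lambda_\bb$ and $\lambda_\ff$). The paper's route is more compact and avoids invoking the SUSY identification maps or the explicit normal-mode decomposition; your route is more elementary and makes the physical input (thermal occupation numbers) explicit. Your caveat about zero modes is apt and matches the paper's standing assumptions.
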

\begin{proof}
Our identification of the phase spaces $V\simeq V_\bb\simeq V_\ff$ gives rise to a single Lie algebra generator $K: V\to V$ for the Hamiltonian $\beta\hat{H}$ from~\eqref{eq:sym-hamiltonian}. The spectrum of $K$ agrees with that of the bosonic generator $K_\bb: V_\bb\to V_\bb$ as well as the fermionic generator $K_\ff: V_\ff\to V_\ff$ defined as $(K_\bb)^a{}_b=\beta\,\Omega^{ac}h^\bb_{cb}$ and $(K_\ff)^\alpha{}_\delta=\beta\, G^{\alpha\gamma}q^\ff_{\gamma\delta}$, respectively. We can compare with~\eqref{eq:mixed_state} to identify that $\rho=e^{-\beta\hat{H}}/Z$ gives rise to $q^\bb_{ab}=\frac{\beta}{2}h^\bb_{ab}=\frac{\beta}{2}\Omega^{-1}_{ac}(K_\bb)^c{}_b$ and $q^\ff_{\alpha\gamma}=\frac{\beta}{2}h^\ff_{\alpha\gamma}=\frac{\beta}{2}G^{-1}_{\alpha\delta}(K_\ff)^\delta{}_\gamma$. We can invert~\eqref{eq:J-q-relationship} to find
\begin{align}
    J_\bb&=-\cot{\Omega q^\bb}=-\cot(K_\bb/2)\equiv-\cot(K/2)\,,\\
    J_\ff&=\tan{G q^\ff}=\tan(K_\ff/2)\equiv\tan(K/2)\,,
\end{align}
from which~\eqref{eq:thermal-state-duality} readily follows.
\end{proof}

\subsection{Application: two-mode system}\label{sec:two_mode_duality}
In this section, we study some consequences of the entanglement duality in a basic two-mode example where the SUSY Hamiltonian is given by a fermionic and a bosonic two-mode squeezing Hamiltonian. While this is a minimal example, it explains certain basic relations which are important for our analysis of a lattice Hamiltonian in the next subsection.
In particular, it highlights that almost maximally entangled fermionic modes are mapped to almost degenerate bosonic modes. This relation is central to the entanglement scaling in 2D systems discussed in Sec.~\ref{sec:Kitaev_2D}.

Consider the following supercharge operator $\hat{\mathcal{Q}}$, which is parametrized by real numbers $r_\bb\geq0$ and $0\leq r_\ff < \pi/4$, corresponding to squeezing parameters:
\begin{align}
\hat{\mathcal{Q}}=&\left( \cosh(r_\bb)\cos(r_\ff)-\sinh(r_\bb)\sin(r_\ff)\right) \left(\hat \gamma_1\hat q_1+\hat\eta_1\hat p_1\right)
\nn&+\left( \cosh(r_\bb)\cos(r_\ff)+\sinh(r_\bb)\sin(r_\ff)\right) \left(\hat \gamma_2\hat q_2+\hat\eta_2\hat p_2\right)
\nn&+\left(\cosh(r_\bb)\sin(r_\ff)-\sinh(r_\bb)\cos(r_\ff)\right)\left(\hat\gamma_1\hat q_2-\hat\eta_1\hat p_2\right)
\nn&+\left(\cosh(r_\bb)\sin(r_\ff)+\sinh(r_\bb)\cos(r_\ff)\right) \left( -\hat\gamma_2\hat q_1+\eta_2\hat p_1\right)\,.
\end{align}
It generates a SUSY Hamiltonian $\hat{H}=\hat{H}_\bb+\hat{H}_\ff$ which consists of the two-mode Hamiltonians:
\begin{align}
\hat{H}_\bb &=\frac{\cosh(2r_\bb)}2\sum_{i=1,2}\left(\hat q_i^2+\hat p_i^2\right)+ \sinh(2r_\bb)\left(\hat p_1\hat p_2-\hat q_1\hat q_2 \right)\,,\\
\hat{H}_\ff&=\frac{\ii \cos(2r_\ff)}2\sum_{i=1,2}\left(\hat \gamma_i \hat\eta_i-\hat \eta_i\hat\gamma_i\right)+\ii\sin(2r_\ff)\left(\hat\gamma_1\eta_2-\gamma_2\eta_1\right)\,.
\end{align}
The ground states of these Hamiltonians are two-mode squeezed states. Accordingly, the identification maps $L_1$ and $L_2$ are represented by
\begin{widetext}
\begin{align}
\begin{split}
\tiny \hspace{-2mm} 
L_1
    \equiv\left(
    \begin{array}{cccc}
    \cos(r_\ff) \cosh (r_\bb)-\sin(r_\ff) \sinh (r_\bb) & \sin(r_\ff) \cosh (r_\bb)-\cos(r_\ff) \sinh (r_\bb) & 0 & 0 \\
    -\sin(r_\ff) \cosh (r_\bb)-\cos(r_\ff) \sinh (r_\bb) & \sin(r_\ff) \sinh (r_\bb)+\cos(r_\ff) \cosh (r_\bb) & 0 & 0 \\
    0 & 0 & \cos(r_\ff) \cosh (r_\bb)-\sin(r_\ff) \sinh (r_\bb) & \cos(r_\ff) \sinh (r_\bb)-\sin(r_\ff) \cosh (r_\bb) \\
    0 & 0 & \cos(r_\ff) \sinh (r_\bb)+\sin(r_\ff) \cosh (r_\bb) & \sin(r_\ff) \sinh (r_\bb)+\cos(r_\ff) \cosh (r_\bb) \\
    \end{array}
    \right),\\
\tiny \hspace{-2mm}
L_2
    \equiv\left(
    \begin{array}{cccc}
    0 & 0 & \cos(r_\ff) \cosh (r_\bb)-\sin(r_\ff) \sinh (r_\bb) & \cos(r_\ff) \sinh (r_\bb)+\sin(r_\ff) \cosh (r_\bb) \\
    0 & 0 & \cos(r_\ff) \sinh (r_\bb)-\sin(r_\ff) \cosh (r_\bb) & \sin(r_\ff) \sinh (r_\bb)+\cos(r_\ff) \cosh (r_\bb) \\
    \sin(r_\ff) \sinh (r_\bb)-\cos(r_\ff) \cosh (r_\bb) & \cos(r_\ff) \sinh (r_\bb)+\sin(r_\ff) \cosh (r_\bb) & 0 & 0 \\
    \cos(r_\ff) \sinh (r_\bb)-\sin(r_\ff) \cosh (r_\bb) & -\sin(r_\ff) \sinh (r_\bb)-\cos(r_\ff) \cosh (r_\bb) & 0 & 0 \\
    \end{array}
    \right).
\end{split}
\label{eq:two_mode_J1J2}
\end{align}
\end{widetext} 
They lead to the complex structures
\begin{align}
J^\bb
&\equiv\left(
\begin{array}{cccc}
 0 & 0 & \cosh (2 r_\bb) & \sinh (2 r_\bb) \\
 0 & 0 & \sinh (2 r_\bb) & \cosh (2 r_\bb) \\
 -\cosh (2 r_\bb) & \sinh (2 r_\bb) & 0 & 0 \\
 \sinh (2 r_\bb) & -\cosh (2 r_\bb) & 0 & 0 \\
\end{array}
\right),
\end{align}
 \begin{align}
J^\ff
&\equiv\left(
\begin{array}{cccc}
 0 & 0 & \cos (2 r_\ff) & \sin (2 r_\ff) \\
 0 & 0 & -\sin (2 r_\ff) & \cos (2 r_\ff) \\
 -\cos (2 r_\ff) & \sin (2 r_\ff) & 0 & 0 \\
 -\sin (2 r_\ff) & -\cos (2 r_\ff) & 0 & 0 \\
\end{array}
\right),
\end{align}
which define pure two-mode squeezed states.

Let us now study how the identification maps act on the single site modes $(\hat q_1,\hat p_1)$ and $(\hat \gamma_1,\hat \eta_1)$, respectively. It is clear from \eqref{eq:two_mode_J1J2} that,  when both the bosonic and fermionic squeezing vanish, \ie $r_\bb=0=r_\ff$, these are trivially identified with each other. However, when either squeezing parameter takes a non-zero value, the identification maps will mix the modes $1$ and $2$.

Beginning with the bosonic mode $S=(\hat q_1,\hat p_1)$, we find it has the restricted complex linear structure 
\begin{align}
J_{S}^\bb\equiv \begin{pmatrix}0&\cosh(2r_\bb)\\ -\cosh(2r_\bb)& 0\end{pmatrix}\,,
\end{align}
which has eigenvalues $\pm\ii \cosh(2r_\bb)$, signaling that, for $r_\bb>0$, the mode is in a mixed Gaussian state due to its entanglement with mode $2$. Now we can use $L_2$ to associate this bosonic mode with a fermionic mode. Here, we need to consider that  the fermionic observables $L_2(\hat q_1)$ and $L_2(\hat p_1)$ are not properly normalized Majorana operators. In fact, as a consequence of \eqref{eq:G_preserve}, we have
\begin{align}
\bra{J_\ff} \left\{L_2(\hat q_1),L_2(\hat p_1)\right\}\ket{J_\ff}
=\bra{J_\bb} \left\{ \hat q_1, \hat p_1\right\}\ket{J_\bb}=\cosh(2r_\bb).
\end{align}
Instead, the properly normalized Majorana operators, which correspond to an orthogonal basis in the fermionic phase space, are 
\begin{align}
\tilde L_2(\hat q_1) :=\frac{L_2(\hat q_1)}{\sqrt{\cosh(2r_\bb)}},\quad \tilde L_2(\hat p_1) :=\frac{L_2(\hat p_1)}{\sqrt{\cosh(2r_\bb)}}\,,
\end{align}
which we can use to calculate the restriction of $J_\ff$ to this subsystem, \eg  to calculate its entanglement with the rest of the fermionic system. When calculating their commutator, we can make use of~\eqref{eq:Omega_preserve} to find
\begin{align}
&\bra{J_\ff}\tilde L_2(\hat q_1)\tilde L_2(\hat p_1)-\tilde L_2(\hat p_1)\tilde L_2(\hat q_1)\ket{J_\ff}
\nn&= \frac1{\cosh(2r_\bb)}\bra{J_\bb} \hat q_1 \hat p_1- \hat p_1 \hat q_1\ket{J_\bb}=\frac\ii{\cosh(2r_\bb)}.
\end{align}
Hence, the restriction of $J_\ff$ has eigenvalues $\pm\ii(\cosh(2r_\bb))^{-1}$, as predicted by the entanglement duality.

The opposite identification of the fermionic mode $R=(\hat{\gamma}_1,\hat{\eta}_1)$ with a bosonic mode is completely analogous, but it additionally highlights an important effect in the limit of $r_\ff\to\pi/4$. The restriction of $J_\ff$ to $(\hat\gamma_1,\hat\eta_1)$  is represented by
\begin{align}
    J^\ff_R\equiv\begin{pmatrix} 0&\cos(2r_\ff)\\-\cos(2r_\ff)&0\end{pmatrix},
\end{align}
which has eigenvalues $\pm\ii\cos(2r_\ff)$. Mapping $\hat\gamma_1$ and $\hat\eta_1$ via the identification map $L_1$, we obtain two bosonic observables which obey
\begin{align}\label{eq:commutator_on_bosonmode}
    \bra{J_\bb}\comm{L_1(\hat \gamma_1)}{L_1(\hat \eta_1)}\ket{J_\bb}=\cos(2r_\ff).
\end{align}
Hence, we need to rescale the operators 
\begin{align}
    \tilde L_1(\hat\gamma_1):=\frac{L_1(\hat\gamma_1)}{\sqrt{\cos(2r_\ff)}},\quad \tilde L_1(\hat\eta_1):=\frac{L_1(\hat\eta_1)}{\sqrt{\cos(2r_\ff)}},
\end{align}
to obtain properly anti-commuting quadrature operators, defining a bosonic mode. For these, we find
\begin{align}
    \braket{J_\bb|\{\tilde L_1(\hat\gamma_1),\tilde L_1(\hat\eta_1)\}|J_\bb} =\frac1{\cos(2r_\ff)}\,,
\end{align}
showing that the restriction of $J_\bb$ has eigenvalues $\pm\ii\left(\cos(2r_\ff)\right)^{-1}$.

In the limit of $r_\ff\to\pi/4$, the fermionic site mode $(\hat\gamma_1,\hat \eta_1)$ approaches maximal entanglement with the rest of the system, corresponding to an entanglement entropy of one bit, \ie $\log{2}$ in natural units. Consequently, also its bosonic dual system approaches maximal entanglement. However, for the bosonic mode, this means that its entanglement entropy grows without bound, as shown in  Fig.~\ref{fig:sf_sb_dual}. At the point of $r_\ff=\pi/4$, the fermionic mode $1$ would  represent a fermionic subsystem which is maximally entangled with mode $2$. However, such a fermionic mode is not mapped to a valid bosonic subsystem by the identification maps. In fact, for $r_\ff=\pi/4$, the identification map $L_1$ acts as
\begin{align}\label{eq:100}
    L_1(\hat\gamma_1)&=\frac{\cosh(r_\bb)-\sinh(r_\bb)}{\sqrt2} \left(\hat q_1+\hat q_2\right)\,,
    \nn
    L_1(\hat\eta_1)&=\frac{\cosh(r_\bb)-\sinh(r_\bb)}{\sqrt2} \left(\hat p_1-\hat p_2\right)\,,
\end{align}
which are commuting observables, and thus do not define a proper bosonic mode (cf.~\eqref{eq:system_complement}), as also seen by the fact that~\eqref{eq:commutator_on_bosonmode} vanishes.

This example highlights how, in general, fermionic Majorana operators that generate an almost maximally entangled mode are mapped to almost commuting bosonic operators by the identification maps, which in the limit of maximal fermionic entanglement, thus, fail to define a bosonic mode. 
The following Sec.~\ref{sec:Kitaev_2D} showcases a peculiar consequence of this fundamental relationship between highly entangled fermionic modes and their bosonic counterparts in 2D.

\subsection{Application: supersymmetric Kitaev honeycomb model}\label{sec:Kitaev_2D}

\begin{figure*}
\centering
    \subfloat[
    \label{fig:kitaev_diagram_f}]{\includegraphics[width=1.0\linewidth]{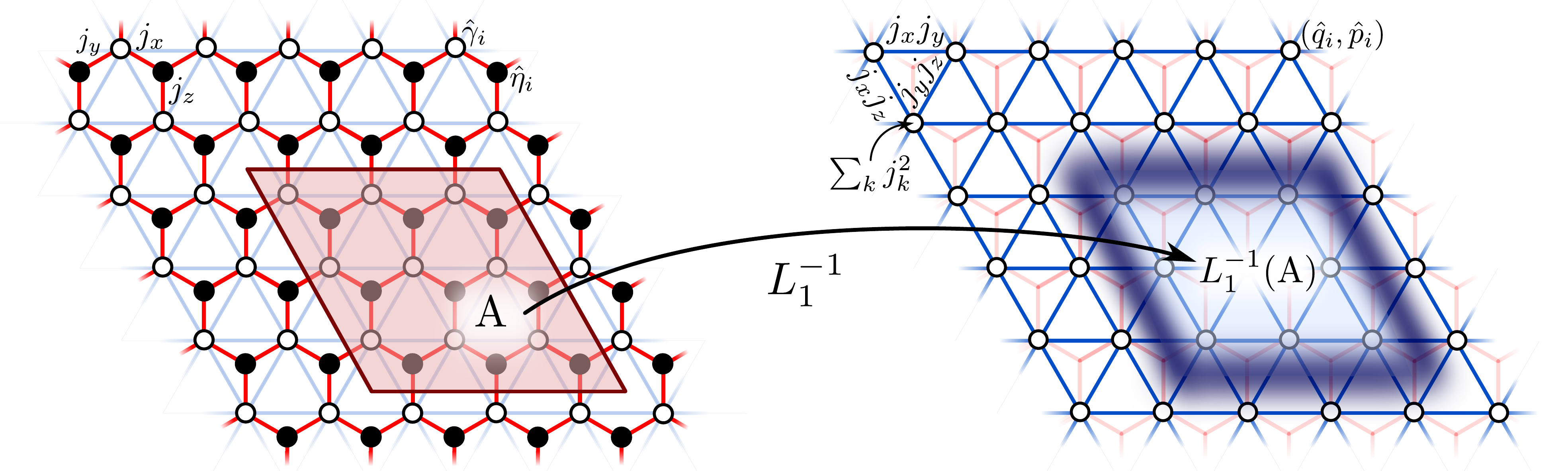}}
    
    \subfloat[
    \label{fig:kitaev2d_entropies}]{\includegraphics[width=.5\linewidth]{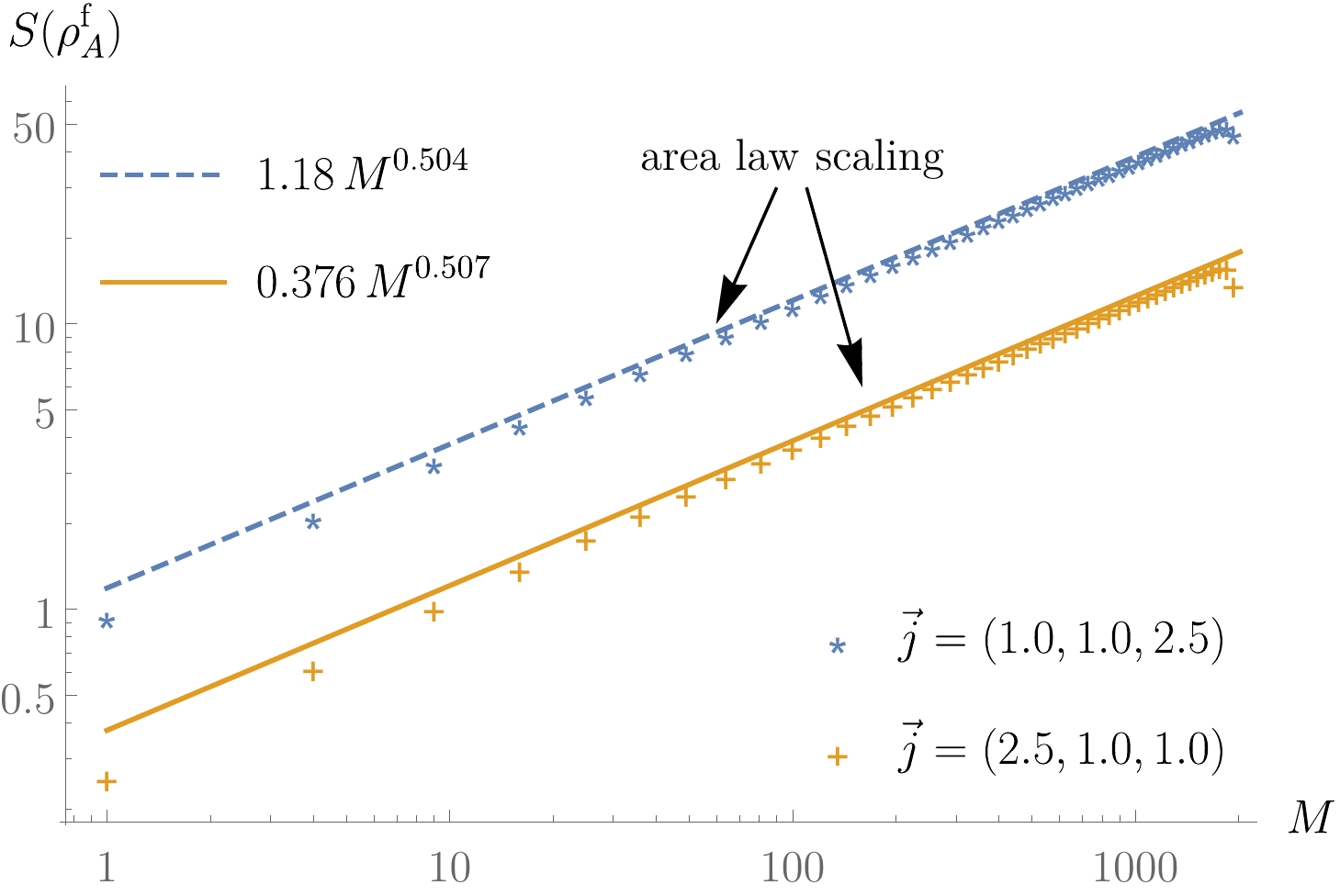}}
    \subfloat[
    \label{fig:kitaev2d_entropies_dual}]{\includegraphics[width=.5\linewidth]{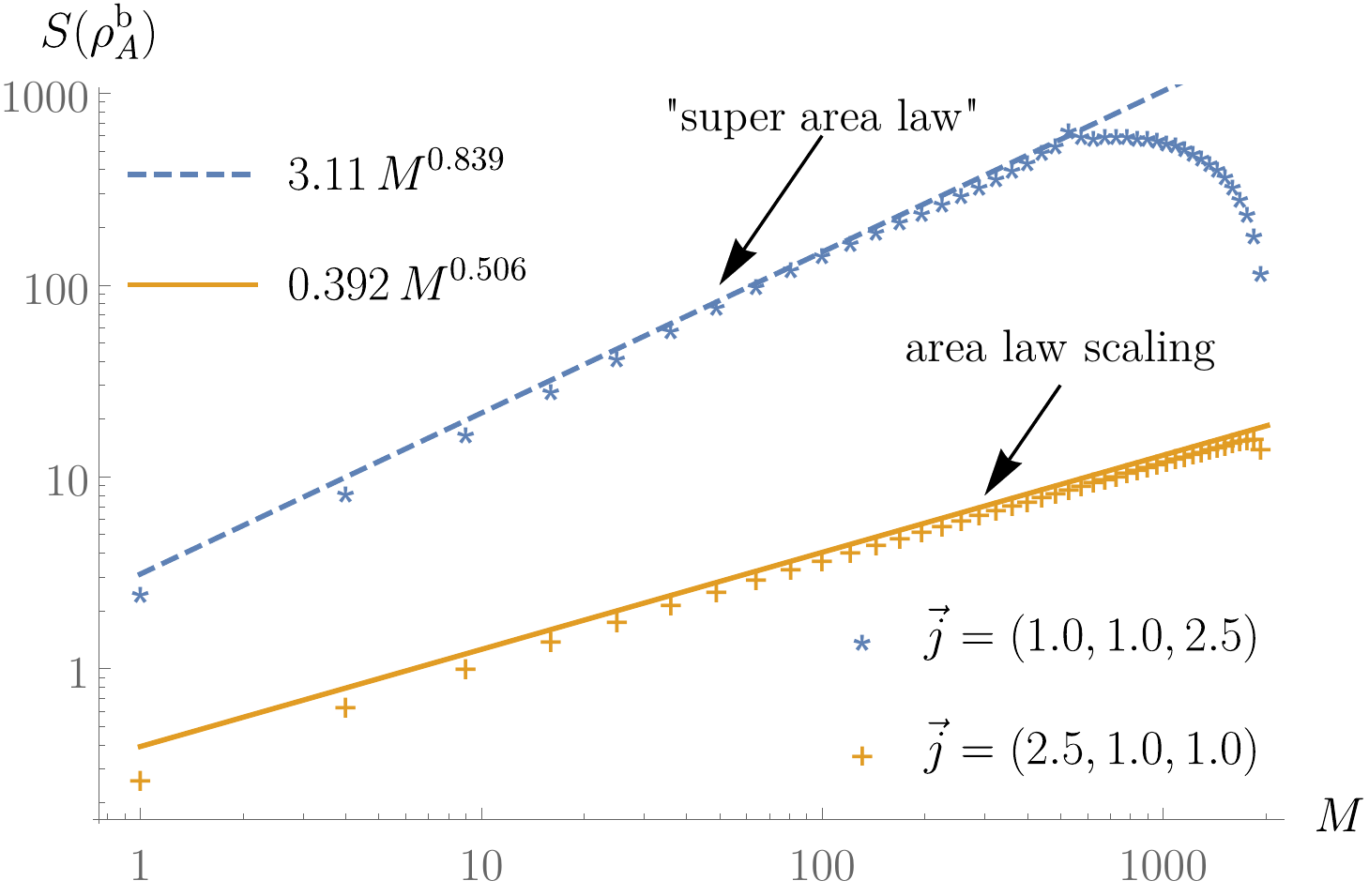}}
    \vspace{-2mm}
    \caption{
    Supersymmetric Kitaev honeycomb model:
    The schematic visualization in (a) represents, on the left, the fermionic honeycomb Kitaev model~\eqref{eq:kitaev2d_Hf} and, on the right,  the bosonic triangular lattice~\eqref{eq:kitaev2d_Hb} which are generated by the supercharge~\eqref{eq:kitaev_2d_Q}.  The map $L_1^{-1}$ identifies fermionic subsystems with bosonic dual subsystems.
    The plots below, in (b) and (c), show numerical calculations of the entanglement entropy of parallelogram-shaped subsystems on the fermionic side, and of their dual bosonic subsystems on the other side, for two different orientations  of the hopping parameters $\vec j=(j_x,j_y,j_z)$.
    For the numerical examples, periodic lattices with a total number of $N=45\times 45=2025$ unit cells were considered. The parallelograms of the subsystems contain $M=m\times m$ modes, \ie $M$ unit cells of the honeycomb lattice, and are oriented such that they do not cut through links with hopping parameter $j_x$.
    The fermionic entropies show good numerical agreement with the area law, which they are known to follow in the thermodynamic limit.
    Depending on the orientation of the couplings $\vec j$ relative to the parallelogram, the dual entropies can follow the area law or a scale much faster.
    }\label{fig:kitaev_2d_fullfigure}
\end{figure*}

In this section, we demonstrate consequences of the derived entanglement duality in the example of the celebrated Kitaev honeycomb model~\cite{kitaev2006anyons}, a spin model with characteristic bond-directional exchanges on the honeycomb lattice (Fig.~\ref{fig:kitaev_diagram_f}), and its supersymmetric extension~\cite{attig2019topological}.
In their gapped phases, both the fermionic and the bosonic lattice of this supersymmetric system exhibit the entanglement area law~\eqref{eq:ent_area_law1}.
Because the identification maps between the fermionic and the bosonic lattice behave local and preserve the shape of subregions of a lattice very well, one may expect also the entropy of these dual subsystems to follow an area law.
However, we show that, in mapping from fermionic subregions to bosonic ones, a peculiar phenomenon can arise where the entanglement entropy of the dual bosonic subsystems scales much faster than its pre-image in the fermionic lattice which follows the area law. This is attributed to the presence of almost maximally entangled modes in the fermionic subsystem.

The analytical solution of the Kitaev honeycomb model is achieved by recasting it in terms of non-interacting Majorana fermions hopping on the same honeycomb lattice (in the
background of a classical (static) $Z_2$ gauge field). The resulting fermionic Hamiltonian reads
\begin{align}\label{eq:kitaev2d_Hf}
 \hat{H}_\ff = -\frac{\ii}{2}\sum_{i,j=1}^N \big( \hat \eta_i {\cal A}^\intercal_{ij} \hat\gamma_j -\hat{\gamma}_i {\cal A}_{ij} \hat\eta_j\big)\,.
\end{align}
Expressed this way, $\hat H_\ff$ describes the hopping of Majorana fermions between the two types of sites of the honeycomb lattice (Fig.~\ref{fig:kitaev_diagram_f}), where each of the Majorana operators $\hat\gamma_i$ and $\hat\eta_i$ resides on  one type of the lattice sites.
The $N\times N$-matrix $\cal{A}$ corresponds to the connectivity matrix of the lattice as depicted in Fig.~\ref{fig:kitaev_diagram_f}, which we consider to be periodic. It involves the hopping strengths along the three bonds around each site of the honeycomb lattice, which we denote by $j_x, j_y, j_z$. The inequality $|j_x|\le |j_y|+|j_z|$ and its cyclic permutations together imply a gapless spectrum of $\hat{H_\ff}$; otherwise, $\hat{H_\ff}$ has a gapped spectrum.
While the phenomena discussed below can arise in both phases, for our numerical results, we will focus on the gapped phase below.

\begin{figure}[t]
\begin{center}
  \includegraphics[width=.98\linewidth]{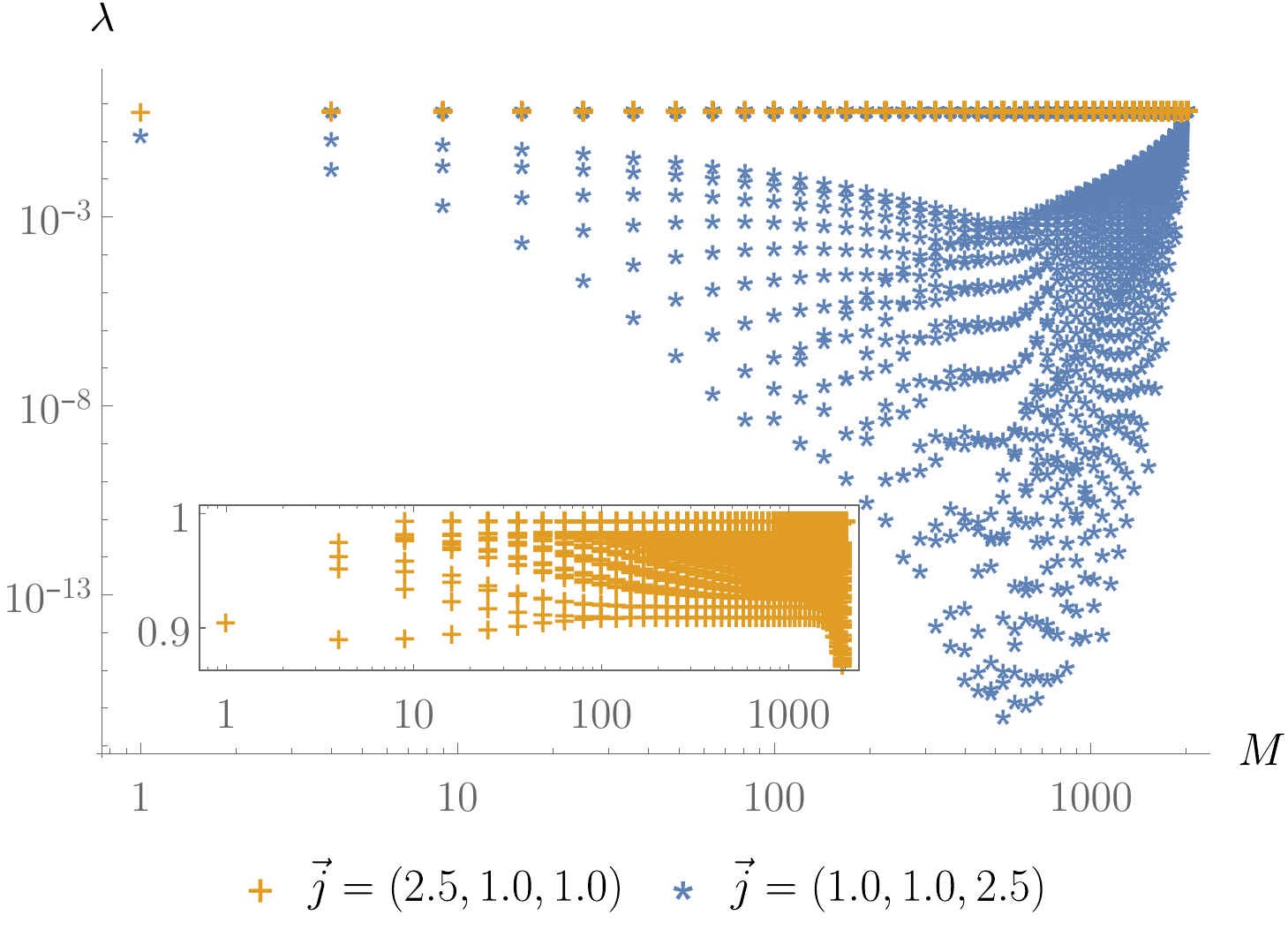}
\end{center}\vspace{-4mm}
\caption{Absolute values of the eigenvalues of the fermionic linear complex structure $J_A^\ff$ on the Kitaev honeycomb lattice restricted to the subsystems of Fig.~\ref{fig:kitaev_2d_fullfigure}. The inset zooms in on the eigenvalues of the first case $\vec j=(2.5,1,1)$, which hardly fall below $|\lambda_i|\approx0.9$. In the second case $\vec j=(1,1,2.5)$, the absolute values decay exponentially with the subsystem size $M$, thus triggering the amplified scaling of the dual entropies in Fig.\ref{fig:kitaev2d_entropies_dual} up until the subsystem exhausts the full lattice of $N=2025$ modes.
}
\label{fig:kitaev_2d_Jspectrum}
\end{figure}

\begin{figure*}
\centering
    \subfloat
    {\includegraphics[width=.45\linewidth,trim={0 0 0.02cm 0},clip]{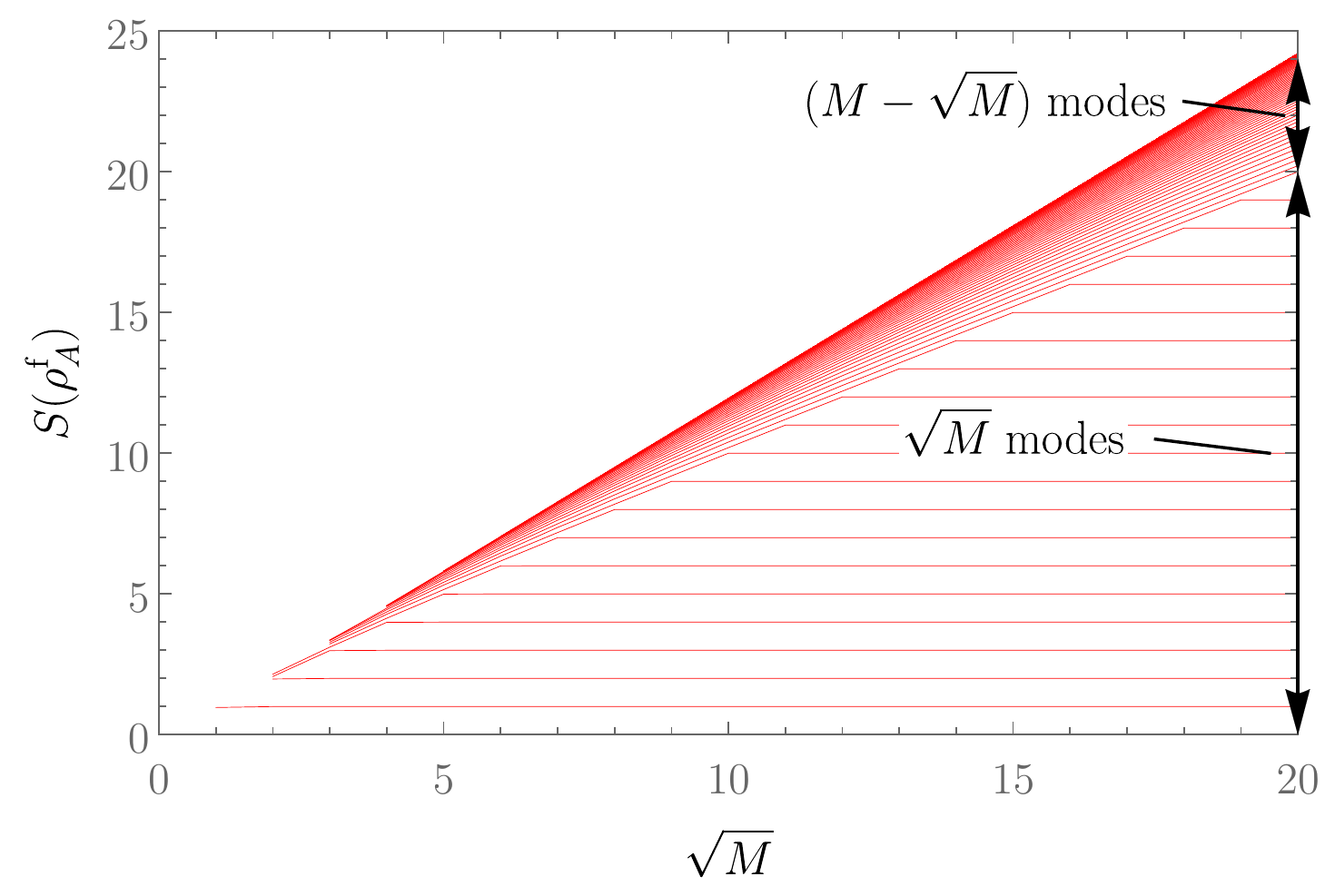}}
    \hspace{.05\linewidth}
    \subfloat
    {\includegraphics[width=.45\linewidth]{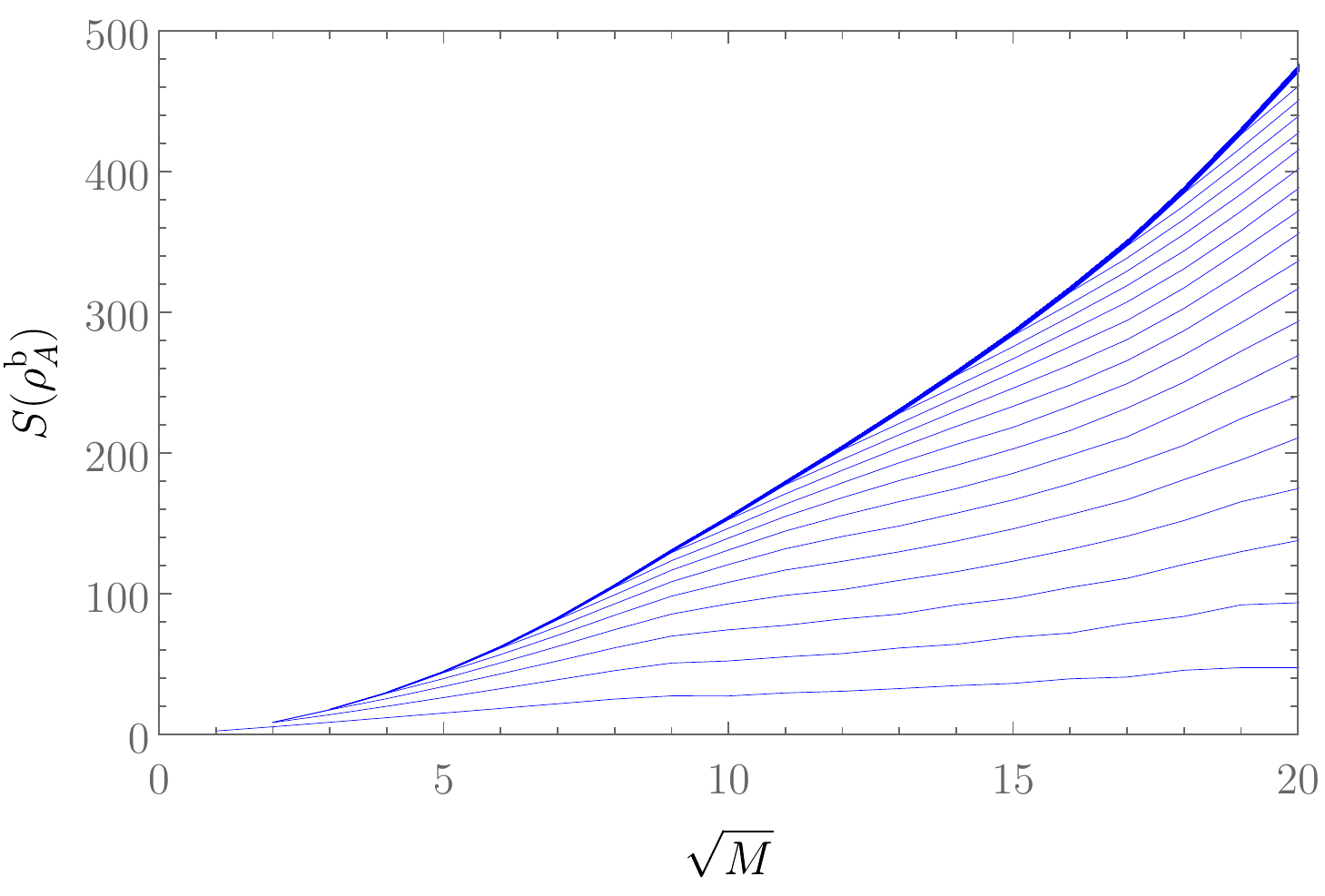}}
    \vspace{-2mm}
    \caption{Stacked plots of the mode-wise contribution to the total entanglement entropy, for the setup of Figs.~\ref{fig:kitaev2d_entropies} and~\ref{fig:kitaev2d_entropies_dual}, for the case $\vec j=(1,1,2.5)$. 
    The left plot refers to the fermionic hexagonal subsystems $A_\ff$ and the right plot to the dual bosonic subsystem $L_1^{-1}(A_\ff)$.
    The difference between the $(i-1)$th and $i$th line gives the contribution of the $i$th normal mode of the subsystem to the total entanglement entropy. Thus, the upper-most line coincides with the respective plot in Figs.~\ref{fig:kitaev2d_entropies} and~\ref{fig:kitaev2d_entropies_dual}. Note that, here, we have changed the horizontal axis to  $\sqrt{M}$, \ie the side length of the parallelogram-shaped fermionic subsystems. The fermionic entropy is dominated by the  $\sqrt M$  normal modes that are almost maximally entangled and the bosonic entropy by their dual modes.
    }\label{fig:kitaev_2d_stacked_plots}
\end{figure*}

A supercharge operator $\hat{\mathcal{Q}}=R_{\alpha a}\hat\xi^\alpha_\ff\hat\xi^a_\bb$ that leads to $\hat{H_\ff}$ being identified with the fermionic part of the supersymmetric Hamiltonian $\hat{H}=\hat{ \mathcal{Q}}^2$ is~\cite{attig2019topological} 
\begin{align}\label{eq:kitaev_2d_Q}
 \hat{\mathcal{Q}} =\sum_{i,j=1}^N \big(\hat \gamma_i {\cal A}_{ij} \hat{q}_j + \hat\eta_i \delta_{ij} \hat{p}_j\big)\,,
\end{align}
which implies a block-diagonal matrix representation of $R_{\alpha a}$. The bosonic part of this Hamiltonian 
\begin{align}\label{eq:kitaev2d_Hb}
    \hat H_\bb=\frac12\sum_{i,j=1}^N  \hat q_i \left(\mathcal{A}^\intercal\mathcal{A}\right)_{ij}\hat q_j +\sum_{i=1}^N\hat p_i^2\,,
\end{align}
corresponds to a triangular lattice of harmonic oscillators, as depicted in Fig.~\ref{fig:kitaev_diagram_f}, or in the appropriate classical limit, to a triangular network of balls and springs~\cite{attig2019topological}.

As previously mentioned, in the gapped phase, both the fermionic and the bosonic lattices of this SUSY Hamiltonian exhibit an area law scaling~\eqref{eq:ent_area_law1} in the entanglement entropy of lattice subregions.
Accordingly, Fig.~\ref{fig:kitaev2d_entropies} shows good agreement of our numerical example with an area law scaling of the entanglement entropy. 
There, we consider a honeycomb lattice which is periodic, with equal side lengths, comprising $N=45\times 45=2025$ unit cells, of two sites each, in total.
From this fermionic lattice, we cut out parallelogram-shaped subsystems of sidelength $m$, \ie containing $M=m\times m$ unit cells, as indicated in Fig.~\ref{fig:kitaev_diagram_f}, and calculate their entanglement entropy $S(\rho_A^\ff)$ with the rest of the lattice.
We compare two different combinations of the hoppings, $\vec j=(j_x,j_y,j_z)=(2.5,1,1)$ and $\vec j=(1,1,2.5)$, with respect to the orientation of the parallelograms. 
These two orientations differ in the type of neighboring sites which the  boundary of the parallelogram separates. 

We emphasize that the absolute shape of the boundary is not crucial here; instead, what types of links (strong or weak) are separated leaves remarkable effects on the scaling of the entanglement entropy in the dual bosonic subsystem, as will be demonstrated shortly. 
In detail, the boundary chosen here only cuts through links with hopping $j_y$ and $j_z$. Thus, in the first case, it only separates sites linked by the two weaker hoppings, whereas in the second case, it separates pairs of sites linked by the strong hopping.
Since the entanglement area law holds independent of the orientation of the couplings, Fig.~\ref{fig:kitaev2d_entropies} shows that the fermionic entanglement does exhibit the scaling as predicted by the area law for both the orientations. The only visible difference is that, in the second case, due to the separation of strongly linked sites, the overall entanglement entropy is about three times larger.

However, one subtle difference between the two orientations which is not evident from the total fermionic entanglement entropy of the parallelogram subsystems is that, in the second case, when strongly linked sites are separated by the boundary of a subsystem, the subsystem develops a significant fraction of ``almost maximally entangled'' fermionic modes that scales as $\sqrt{M}$ for $M$ sites in the subsystem. However, as we shall see, these fermionic modes lead to a vastly different behavior of the entanglement entropy of the dual subsystems in the bosonic lattice associated with the original fermionic parallelograms by the identification maps.

In the periodic lattices considered here, the identification maps behave local in the sense that on-site operators in one lattice are mapped to exponentially localized operators on the supersymmetric partner lattice. Hence, the geometrical appearance of subsystems is well preserved when they are mapped to their dual subsystems in the supersymmetric partner lattice by the identification maps.
At first sight, this may seem to suggest that the entanglement entropy of the dual systems also should exhibit an area law scaling since the entanglement area relation of \eqref{eq:ent_area_law1} holds in both the fermionic and the bosonic lattice we consider.
However, Fig.~\ref{fig:kitaev2d_entropies_dual} demonstrates, for the numerical example introduced before, that the entanglement entropy of fermionic subsystems and their dual bosonic subsystems can scale very differently: depending on the  hoppings $\vec j$ with respect to the parallelogram subsystems, the dual entropy may scale in agreement with an area law scaling or they can scale much faster according to a ``super area law.''

How does this phenomenon arise?
First, let us note that, when the parallelogram in the original fermionic lattice only cuts links with weaker hoppings, the dual entropy follows an area law scaling. The higher scaling of the dual entropy appears when the parallelogram cuts through links with the strongest hopping.
This separation of strongly linked Majorona sites, however, heralds the presence of normal modes in the fermionic parallelogram which are (almost) maximally entangled with the rest of the lattice.
The presence of such modes  is the reason for the observed peculiar scaling of the dual entropies.

In fact, the mathematical explanation for the observed amplified scaling of the dual entropies is rooted in the spectrum of the restricted fermionic linear complex structure $J^\ff_A$. 
Figure~\ref{fig:kitaev_2d_Jspectrum} plots the absolute values of the eigenvalues for the subsystems considered in Fig.~\ref{fig:kitaev_diagram_f} for the two distinct orientations of the hopping mentioned before.
In the first case, where the parallelograms do not cut through any strong links, the eigenvalues roughly lie in the interval $0.9 <|\lambda_i|\leq 1$, as seen in the inset of Fig.~\ref{fig:kitaev_2d_Jspectrum}.
As is evident from Fig.~\ref{fig:sf_sb_dual}, in this regime, the entanglement entropy of each of the eigenmodes of $J^\ff_A$, \ie the normal modes of the subsystem, is almost the same as the entanglement entropy of their dual bosonic modes. Thus, in the first case, the entanglement entropies for the fermionic subsystems and their bosonic duals are almost the same and follow the same scaling.

In contrast, in the second case, the spectrum of $J_A^\ff$ exhibits a certain number of eigenvalues which are very small or almost zero. Note that the number of these pairs of eigenvalues $\pm\ii\lambda_i$ that fall below $\lambda_i\lesssim 0.1$ corresponds exactly to the side length of the parallelograms, \ie is half of the number of strong links which the parallelogram  cut through.
The normal modes corresponding to these eigenvalues thus share almost maximal entanglement with the rest of the system, \ie the complement of the region surrounded by the parallelogram.
In terms of their entanglement entropy, as discussed before and also evident from Fig.~\ref{fig:sf_sb_dual}, the fermionic normal modes approach the maximum value of one bit entanglement entropy as $\lambda_i\to0$  following~\eqref{eq:entropy_functions}, whereas the entanglement entropy of their dual bosonic modes diverges as $\lambda_i^{-1}\to\infty$.

As a result, the total entanglement entropy of the dual bosonic subsystem scales much faster with its subsystem size than the original fermionic system does.
This effect is visualized in Fig.~\ref{fig:kitaev_2d_stacked_plots}, whose stacked plots show the mode-wise contribution of the normal modes to the total entropy of the fermionic subregions and their duals in the bosonic lattice.
On the fermionic side, the individual contributions are bounded by one bit per mode; thus, their summed contribution still results in a growth linear in the perimeter of the parallelogram. However, on the bosonic side, the individual contribution from each normal mode continues to grow as the system size increases, resulting in a higher scaling of the entanglement entropy than that predicted by the area law.
Let us emphasize that the total number of low-lying fermionic eigenvalues scaling as $\sqrt{M}$ (with $M$ being the subsystem size) alone is \emph{not} sufficient to give rise to a ``super area law'' on the bosonic side but also that these low-lying values actually decay towards zero. If they were bounded by some $\lambda_{\min}$, such that $\lambda_i\geq \lambda_{\text{min}}>0$, the entropy of each dual bosonic mode would be upper bounded by $s_\bb(1/\lambda_{\min})$,  resulting again in the conventional $\sqrt{M}$ scaling of the area law.

The peculiar phenomenon observed above can be viewed as a direct physical instance of the minimal two-mode example in Sec.~\ref{sec:two_mode_duality} taking place at the edge of the subsystem:
Every time its boundary cuts through a pair of strong links (on opposite sites of the parallelogram cutout), the subsystem exhibits a strongly entangled normal mode (corresponding to an almost vanishing eigenvalue of the restricted complex structure).
These normal modes are highly localized at the edge of the subsystem and share no entanglement with any mode inside the subsystem but with those lying on the complement of the subsystem. In fact, the normal modes of the subsystem, which carry entanglement, form pairs with those from the complement such that each normal mode is entangled with exactly one partner (normal) mode of the complement.\footnote{These pairs are connected by the complex structure $J_\ff$ of the ground state.}
The partner normal modes of the highly entangled subsystem normal modes are localized right outside the subsystem. Thus, a pair of partner modes (one inside the subsystem and one outside) forms a two-mode subsystem, localized in the immediate neighborhood of the  boundary of the subsystem, which is not entangled with the rest of the system but in a pure two-mode squeezed state on its own.
The identification maps now map each pair of such fermionic normal mode partners to a pair of bosonic normal mode partners, one inside the dual subsystem and one outside. Due to the locality properties of the identification maps, their joint support on the bosonic lattice sites is closely related to the shape of the fermionic pair.

In this mapping, pair by pair, the same mechanism as discussed in \eqref{eq:100} takes place.
The Majorana operators of the fermionic subsystem normal mode are mapped to a pair of bosonic observables which are almost commuting, thus defining a highly entangled bosonic mode.
Such almost commuting bosonic observables need not be spatially separated on the lattice, but they can have equal support on the same lattice sites, as \eqref{eq:100} demonstrates: There, both bosonic observables have equal support on both of the two modes; however, one quadrature is proportional to $\hat q_1+\hat q_2$ but the other to $\hat p_1-\hat p_2$; thus, they commute.

Because of such localized and highly entangled bosonic modes, it is possible for the dual bosonic subsystems, despite being well localized, to exhibit a scaling of entanglement entropy that exceeds the area law of the original fermionic lattice. The entanglement area law assumes the subsystem division being a direct sum of individual lattice sites, \ie in a bosonic system, the quadarature operators $\hat q_i$ and $\hat p_i$ either both belong to the subsystem or they both do not.
In contrast, the boundary between the dual bosonic subsystems and the rest of the (bosonic) lattice considered in this example may well separate different linear combinations of the onsite bosonic operators. 

\section{Discussion}\label{sec:discussion}
In this article, we study the entanglement properties of bosonic and fermionic Gaussian states that are related via supersymmetry, in other words, belong to Hamiltonians which are supersymmetric partners of each other. After reviewing a unified framework to describe these states in terms of Kähler structures, we prove the main result of this article in Proposition\ \ref{prop:entanglement-duality}, which relates the bosonic and the fermionic entanglement spectrum of a chosen subsystem in a supersymmetric Gaussian state. The result is based on supersymmetric identification maps that are constructed from the supercharge operator $\hat{\mathcal{Q}}$. They enable us to uniquely identify subsystems both bosonic and  fermionic, which we refer to as dual to each other. In Proposition~\ref{prop:thermal-duality}, we extend the said duality to include thermal states associated with supersymmetric Hamiltonians, for which we find the same relationship between the bosonic and the fermionic thermal states as for the reduced states in the subsystems.

The rest of the article illustrates this result and its implications in supersymmetric lattice models. 
In particular, we investigate to what extent identification maps  constructed from a \emph{local} supercharge operator preserve this locality, \ie to what extent a local subsystem on the bosonic side is identified with a local subsystem on the fermionic side and vice versa. 
This is important to explain why our abstract duality is of relevance when studying physical systems: It shows a simple relation between the entanglement of bosonic and fermionic subsystems that can both be thought local in a precise way.

The examples in this work suggest that, for SUSY lattice Hamiltonians, the locality properties of the identification maps are related to the boundary conditions of the lattice and the presence of edge modes. 
For example, in the one-dimensional open chain of Sec.~\ref{sec:Kitaev_1D_example}, the identification maps featured highly non-local behavior in the topological phase. However, in the same system with periodic boundary conditions (obtained by extending the supercharge \eqref{eq:Kitaev_1D_supercharge} to be translation-invariant), the identification maps behave rather local, even deep in the topological phase.

In the context of localized subsystems, a peculiar consequence of the entanglement duality is the appearance of ``super area law'' behavior in the entanglement entropy of bosonic subsystems dual to subsystems with certain shapes on the fermionic side, seen in Sec.~\ref{sec:Kitaev_2D}. This phenomenon is related to the appearance of almost maximally entangled modes in the fermionic subsystem, for which the spectrum of the fermionic linear complex structure nearly vanishes. The entanglement duality then implies an unbounded growth of entanglement for the dual bosonic system. 
Since it is well known~\cite{eisert2010colloquium} that the entanglement entropy associated with a ground state of a gapped and local Hamiltonian (bosonic or fermionic) satisfies an area law, this raises the question of how occasions where our entanglement duality relates an area law on the fermionic side with a ``super area law'' on the bosonic side for the respective ground states of a local supersymmetric Hamiltonian (such as the honeycomb model considered in~\ref{sec:Kitaev_2D}) can appear. The answer to this question lies in the type of bosonic subsystem that arises under the duality for a fermionic subsystem with large entanglement. 
We saw in Sec.~\ref{sec:two_mode_duality} how there can be an arbitrary amount of entanglement associated with a single bosonic mode due to choosing a subsystem that effectively partially separates a quadrature operator $\hat{q}_i$ from its canonically conjugate operator $\hat{p}_i$. Such types of subsystems are typically not considered in the context of studying area laws, and it is not surprising that the standard results on area laws for the ground states of gapped local Hamiltonians do not apply to them.

At this stage, it is a natural question to what extent such dual subsystems, identified by the supersymmetric duality as in Sec.~\ref{sec:Kitaev_2D}, should be considered a physical reality  or a mathematical concept. 
The answer will highly depend on the concrete physical realization of the model. For example, if the bosonic degrees of freedom were represented by photons,  such a subsystem could at least in principle be probed by applying an appropriate sequence of Gaussian transformations (implemented by standard linear optics devices), as long as $\hat{q}_i$ and $\hat{p}_i$ are not fully separated, \ie $[\hat{q}_i,\hat{p}_i]\neq 0$. 
On the other hand, since the implementation of large squeezing, as discussed in Sec.~\ref{sec:two_mode_duality}, is known to be very challenging, accessing the subsystems may well be physically infeasible.

In keeping with the study of topological properties of translation-invariant SUSY lattice Hamiltonians in arbitrary dimensions, a generalization of the identification maps to higher dimensional lattices certainly constitutes a promising avenue to explore. Here, we would like to highlight \cite{attig2017classical}, where a spin-fermion correspondence, very much in the same spirit of our SUSY map,  has been worked out engaging three-dimensional lattice models as well. Entanglement properties of a three-dimensional generalization of the Kitaev honeycomb model have also been studied \cite{mondragon2014entanglement}. Other variants of three-dimensional Kitaev spin liquids exist \cite{o2016classification, eschmann2020thermodynamic} with entanglement properties hitherto unexplored; our dualities find a straightforward application therein.

\begin{acknowledgements} 
RHJ gratefully acknowledges support by the Wenner-Gren Foundations. 
LH gratefully acknowledges financial support by the Alexander von Humboldt Foundation. 
KR thanks the sponsorship, in part, by the Swedish Research Council.
\end{acknowledgements}

\bibliographystyle{unsrt}
\bibliography{references}

\begin{thebibliography}{10}

\bibitem{gol1989extension}
Yuri~A Gol'fand and Evgeny~P Likhtman.
\newblock Extension of the algebra of poincar{\'e} group generators and
  violation of p invariance.
\newblock In {\em Supergravities in diverse dimensions. Volume 1}. 1989.

\bibitem{ramond1971dual}
Pierre Ramond.
\newblock Dual theory for free fermions.
\newblock {\em Physical Review D}, 3(10):2415, 1971.

\bibitem{neveu1971factorizable}
Andr{\'e} Neveu and John~H Schwarz.
\newblock Factorizable dual model of pions.
\newblock {\em Nuclear Physics B}, 31(1):86--112, 1971.

\bibitem{sourlas1985introduction}
Nicolas Sourlas.
\newblock Introduction to supersymmetry in condensed matter physics.
\newblock {\em Physica D: Nonlinear Phenomena}, 15(1-2):115--122, 1985.

\bibitem{baer2006weak}
Howard Baer and Xerxes Tata.
\newblock {\em Weak scale supersymmetry: From superfields to scattering
  events}.
\newblock Cambridge University Press, 2006.

\bibitem{cooper1983aspects}
Fred Cooper and Barry Freedman.
\newblock Aspects of supersymmetric quantum mechanics.
\newblock {\em Annals of Physics}, 146(2):262--288, 1983.

\bibitem{cooper1995supersymmetry}
Fred Cooper, Avinash Khare, and Uday Sukhatme.
\newblock Supersymmetry and quantum mechanics.
\newblock {\em Physics Reports}, 251(5-6):267--385, 1995.

\bibitem{kirchberg2003algebraic}
A~Kirchberg, JD~L{\"a}nge, PAG Pisani, and A~Wipf.
\newblock Algebraic solution of the supersymmetric hydrogen atom in d
  dimensions.
\newblock {\em Annals of Physics}, 303(2):359--388, 2003.

\bibitem{gangopadhyaya2017supersymmetric}
Asim Gangopadhyaya, Jeffry~V Mallow, and Constantin Rasinariu.
\newblock {\em Supersymmetric quantum mechanics: An introduction}.
\newblock World Scientific Publishing Company, 2017.

\bibitem{lawler2016supersymmetry}
Michael~J Lawler.
\newblock Supersymmetry protected topological phases of isostatic lattices and
  kagome antiferromagnets.
\newblock {\em Physical Review B}, 94(16):165101, 2016.

\bibitem{attig2019topological}
Jan Attig, Krishanu Roychowdhury, Michael~J Lawler, and Simon Trebst.
\newblock Topological mechanics from supersymmetry.
\newblock {\em Physical Review Research}, 1(3):032047, 2019.

\bibitem{witten1982constraints}
Edward Witten.
\newblock Constraints on supersymmetry breaking.
\newblock {\em Nuclear Physics B}, 202(2):253--316, 1982.

\bibitem{kane2014topological}
CL~Kane and TC~Lubensky.
\newblock Topological boundary modes in isostatic lattices.
\newblock {\em Nature Physics}, 10(1):39--45, 2014.

\bibitem{wang2007quantum}
Xiang-Bin Wang, Tohya Hiroshima, Akihisa Tomita, and Masahito Hayashi.
\newblock Quantum information with gaussian states.
\newblock {\em Physics reports}, 448(1-4):1--111, 2007.

\bibitem{weedbrook2012gaussian}
Christian Weedbrook, Stefano Pirandola, Ra{\'u}l Garc{\'\i}a-Patr{\'o}n,
  Nicolas~J Cerf, Timothy~C Ralph, Jeffrey~H Shapiro, and Seth Lloyd.
\newblock Gaussian quantum information.
\newblock {\em Reviews of Modern Physics}, 84(2):621, 2012.

\bibitem{adesso2014continuous}
Gerardo Adesso, Sammy Ragy, and Antony~R Lee.
\newblock Continuous variable quantum information: Gaussian states and beyond.
\newblock {\em Open Systems \& Information Dynamics}, 21(01n02):1440001, 2014.

\bibitem{shi2018variational}
Tao Shi, Eugene Demler, and J~Ignacio Cirac.
\newblock Variational study of fermionic and bosonic systems with non-gaussian
  states: Theory and applications.
\newblock {\em Annals of Physics}, 390:245--302, 2018.

\bibitem{Derezinski:2013dra}
Jan Derezi\'nski and Christian G\'erard.
\newblock {\em {Mathematics of Quantization and Quantum Fields}}.
\newblock Cambridge Monographs on Mathematical Physics. Cambridge University
  Press, 4 2013.

\bibitem{deutsch_91}
Joshua~M. Deutsch.
\newblock Quantum statistical mechanics in a closed system.
\newblock {\em Physical Review A}, 43(4):2046--2049, February 1991.

\bibitem{srednicki_94}
M.~Srednicki.
\newblock Chaos and quantum thermalization.
\newblock {\em Physical Review E}, 50(2):888--901, August 1994.

\bibitem{rigol_dunjko_08}
M.~Rigol, V.~Dunjko, and M.~Olshanii.
\newblock Thermalization and its mechanism for generic isolated quantum
  systems.
\newblock {\em Nature}, 452:854, 2008.

\bibitem{d2016quantum}
Luca D'Alessio, Yariv Kafri, Anatoli Polkovnikov, and Marcos Rigol.
\newblock From quantum chaos and eigenstate thermalization to statistical
  mechanics and thermodynamics.
\newblock {\em Advances in Physics}, 65(3):239--362, 2016.

\bibitem{gogolin2016equilibration}
Christian Gogolin and Jens Eisert.
\newblock Equilibration, thermalisation, and the emergence of statistical
  mechanics in closed quantum systems.
\newblock {\em Reports on Progress in Physics}, 79(5):056001, 2016.

\bibitem{deutsch2018eigenstate}
Joshua~M Deutsch.
\newblock Eigenstate thermalization hypothesis.
\newblock {\em Reports on Progress in Physics}, 81(8):082001, 2018.

\bibitem{goldstein_lebowitz_06}
S.~Goldstein, J.~L. Lebowitz, R.~Tumulka, and N.~Zangh\`\i{}.
\newblock Canonical typicality.
\newblock {\em Physical Review Letters}, 96(5):050403, February 2006.

\bibitem{popescu_short_06}
S.~Popescu, A.~J. Short, and A.~Winter.
\newblock Entanglement and the foundations of statistical mechanics.
\newblock {\em Nature Physics}, 2:754, 2006.

\bibitem{tasaki_98}
H.~Tasaki.
\newblock From quantum dynamics to the canonical distribution: General picture
  and a rigorous example.
\newblock {\em Physical Review Letters}, 80(7):1373--1376, February 1998.

\bibitem{polkovnikov2011colloquium}
Anatoli Polkovnikov, Krishnendu Sengupta, Alessandro Silva, and Mukund
  Vengalattore.
\newblock Colloquium: Nonequilibrium dynamics of closed interacting quantum
  systems.
\newblock {\em Reviews of Modern Physics}, 83(3):863, 2011.

\bibitem{vidmar2017entanglement}
Lev Vidmar, Lucas Hackl, Eugenio Bianchi, and Marcos Rigol.
\newblock Entanglement entropy of eigenstates of quadratic fermionic
  hamiltonians.
\newblock {\em Physical Review Letters}, 119(2):020601, 2017.

\bibitem{liu2018quantum}
Chunxiao Liu, Xiao Chen, and Leon Balents.
\newblock Quantum entanglement of the {S}achdev-{Y}e-{K}itaev models.
\newblock {\em Physical Review B}, 97(24):245126, 2018.

\bibitem{vidmar2018volume}
Lev Vidmar, Lucas Hackl, Eugenio Bianchi, and Marcos Rigol.
\newblock Volume law and quantum criticality in the entanglement entropy of
  excited eigenstates of the quantum ising model.
\newblock {\em Physical Review Letters}, 121(22):220602, 2018.

\bibitem{hackl2019average}
Lucas Hackl, Lev Vidmar, Marcos Rigol, and Eugenio Bianchi.
\newblock Average eigenstate entanglement entropy of the xy chain in a
  transverse field and its universality for translationally invariant quadratic
  fermionic models.
\newblock {\em Physical Review B}, 99(7):075123, 2019.

\bibitem{Vidmar:2017pak}
Lev Vidmar and Marcos Rigol.
\newblock Entanglement entropy of eigenstates of quantum chaotic hamiltonians.
\newblock {\em Physical Review Letters}, 119(22):220603, 2017.

\bibitem{bianchi2019typical}
Eugenio Bianchi and Pietro Don{\`a}.
\newblock Typical entanglement entropy in the presence of a center: Page curve
  and its variance.
\newblock {\em Physical Review D}, 100(10):105010, nov 2019.

\bibitem{lydzba2020eigenstate}
Patrycja {\L}yd{\.z}ba, Marcos Rigol, and Lev Vidmar.
\newblock Eigenstate entanglement entropy in random quadratic hamiltonians.
\newblock {\em Physical Review Letters}, 125(18):180604, 2020.

\bibitem{bennett1998quantum}
Charles~H. Bennett and Peter~W. Shor.
\newblock Quantum information theory.
\newblock {\em IEEE transactions on information theory}, 44(6):2724--2742,
  1998.

\bibitem{eisert2006entanglement}
Jens Eisert.
\newblock Entanglement in quantum information theory.
\newblock {\em arXiv preprint quant-ph/0610253}, 2006.

\bibitem{Hayden:2006}
Patrick Hayden, Debbie~W. Leung, and Andreas Winter.
\newblock Aspects of generic entanglement.
\newblock {\em Communications in Mathematical Physics}, 265:95--117, July 2006.

\bibitem{Hayden:2007cs}
Patrick Hayden and John Preskill.
\newblock {Black holes as mirrors: Quantum information in random subsystems}.
\newblock {\em JHEP}, 09:120, 2007.

\bibitem{Sekino:2008he}
Yasuhiro Sekino and Leonard Susskind.
\newblock {Fast Scramblers}.
\newblock {\em JHEP}, 10:065, 2008.

\bibitem{Hosur:2015ylk}
Pavan Hosur, Xiao-Liang Qi, Daniel~A. Roberts, and Beni Yoshida.
\newblock {Chaos in quantum channels}.
\newblock {\em JHEP}, 02:004, 2016.

\bibitem{Roberts:2016hpo}
Daniel~A. Roberts and Beni Yoshida.
\newblock {Chaos and complexity by design}.
\newblock {\em JHEP}, 04:121, 2017.

\bibitem{Fujita:2017pju}
Yuya~O. Nakagawa, Masataka Watanabe, Sho Sugiura, and Hiroyuki Fujita.
\newblock {Universality in volume-law entanglement of scrambled pure quantum
  states}.
\newblock {\em Nature Communications}, 9(1):1635, 2018.

\bibitem{Lu:2017tbo}
Tsung-Cheng Lu and Tarun Grover.
\newblock {Renyi Entropy of Chaotic Eigenstates}.
\newblock {\em Physical Review E}, 99(3):032111, 2019.

\bibitem{Fujita:2018wtr}
Hiroyuki Fujita, Yuya~O. Nakagawa, Sho Sugiura, and Masataka Watanabe.
\newblock {Page Curves for General Interacting Systems}.
\newblock {\em JHEP}, 12:112, 2018.

\bibitem{kitaev2006topological}
Alexei Kitaev and John Preskill.
\newblock Topological entanglement entropy.
\newblock {\em Physical review letters}, 96(11):110404, 2006.

\bibitem{levin2006detecting}
Michael Levin and Xiao-Gang Wen.
\newblock Detecting topological order in a ground state wave function.
\newblock {\em Physical review letters}, 96(11):110405, 2006.

\bibitem{furukawa2007topological}
Shunsuke Furukawa and Gr{\'e}goire Misguich.
\newblock Topological entanglement entropy in the quantum dimer model on the
  triangular lattice.
\newblock {\em Physical Review B}, 75(21):214407, 2007.

\bibitem{yao2010entanglement}
Hong Yao and Xiao-Liang Qi.
\newblock Entanglement entropy and entanglement spectrum of the kitaev model.
\newblock {\em Physical review letters}, 105(8):080501, 2010.

\bibitem{isakov2011topological}
Sergei~V Isakov, Matthew~B Hastings, and Roger~G Melko.
\newblock Topological entanglement entropy of a bose--hubbard spin liquid.
\newblock {\em Nature Physics}, 7(10):772--775, 2011.

\bibitem{depenbrock2012nature}
Stefan Depenbrock, Ian~P McCulloch, and Ulrich Schollw{\"o}ck.
\newblock Nature of the spin-liquid ground state of the s= 1/2 heisenberg model
  on the kagome lattice.
\newblock {\em Physical review letters}, 109(6):067201, 2012.

\bibitem{jiang2012identifying}
Hong-Chen Jiang, Zhenghan Wang, and Leon Balents.
\newblock Identifying topological order by entanglement entropy.
\newblock {\em Nature Physics}, 8(12):902--905, 2012.

\bibitem{gong2014emergent}
Shou-Shu Gong, Wei Zhu, and DN~Sheng.
\newblock Emergent chiral spin liquid: Fractional quantum hall effect in a
  kagome heisenberg model.
\newblock {\em Scientific reports}, 4(1):1--6, 2014.

\bibitem{roychowdhury2015z}
Krishanu Roychowdhury, Subhro Bhattacharjee, and Frank Pollmann.
\newblock Z 2 topological liquid of hard-core bosons on a kagome lattice at 1/3
  filling.
\newblock {\em Physical Review B}, 92(7):075141, 2015.

\bibitem{sorkin1983entropy}
R.~D. Sorkin.
\newblock On the entropy of the vacuum outside a horizon.
\newblock In B.~Bertotti, F.~de~Felice, and A.~Pascolini, editors, {\em Tenth
  International Conference on General Relativity and Gravitation (held in
  Padova, 4-9 July, 1983), Contributed Papers}, volume~2, pages 734--736. Roma,
  Consiglio Nazionale Delle Ricerche, 1983.

\bibitem{peschel2003calculation}
I.~Peschel.
\newblock Calculation of reduced density matrices from correlation functions.
\newblock {\em J. Phys. A}, 36(14):L205, 2003.

\bibitem{horodecki2009quantum}
Ryszard Horodecki, Pawe{\l} Horodecki, Micha{\l} Horodecki, and Karol
  Horodecki.
\newblock Quantum entanglement.
\newblock {\em Reviews of modern physics}, 81(2):865, 2009.

\bibitem{laflorencie2016quantum}
Nicolas Laflorencie.
\newblock Quantum entanglement in condensed matter systems.
\newblock {\em Physics Reports}, 646:1--59, 2016.

\bibitem{bengtsson2017geometry}
Ingemar Bengtsson and Karol {\.Z}yczkowski.
\newblock {\em Geometry of quantum states: an introduction to quantum
  entanglement}.
\newblock Cambridge university press, 2017.

\bibitem{kitaev2006anyons}
Alexei Kitaev.
\newblock Anyons in an exactly solved model and beyond.
\newblock {\em Annals of Physics}, 321(1):2--111, 2006.

\bibitem{eisert2010colloquium}
Jens Eisert, Marcus Cramer, and Martin~B Plenio.
\newblock Colloquium: Area laws for the entanglement entropy.
\newblock {\em Reviews of Modern Physics}, 82(1):277, 2010.

\bibitem{hackl2019minimal}
Lucas Hackl and Robert~H Jonsson.
\newblock Minimal energy cost of entanglement extraction.
\newblock {\em Quantum}, 3:165, 2019.

\bibitem{hackl2020bosonic}
Lucas Hackl and Eugenio Bianchi.
\newblock Bosonic and fermionic gaussian states from kähler structures.
\newblock {\em arXiv preprint arXiv:2010.15518}, 2020.

\bibitem{windt2020local}
Bennet Windt, Alexander Jahn, Jens Eisert, and Lucas Hackl.
\newblock Local optimization on pure gaussian state manifolds.
\newblock {\em arXiv preprint arXiv:2009.11884}, 2020.

\bibitem{penrose1984spinors}
Roger Penrose and Wolfgang Rindler.
\newblock {\em Spinors and space-time: Volume 1, Two-spinor calculus and
  relativistic fields}, volume~1.
\newblock Cambridge University Press, 1984.

\bibitem{bianchi2018linear}
Eugenio Bianchi, Lucas Hackl, and Nelson Yokomizo.
\newblock Linear growth of the entanglement entropy and the kolmogorov-sinai
  rate.
\newblock {\em Journal of High Energy Physics}, 2018(3):25, 2018.

\bibitem{kitaev2001unpaired}
A~Yu Kitaev.
\newblock Unpaired majorana fermions in quantum wires.
\newblock {\em Physics-Uspekhi}, 44(10S):131, 2001.

\bibitem{chen2014nonlinear}
Bryan Gin-ge Chen, Nitin Upadhyaya, and Vincenzo Vitelli.
\newblock Nonlinear conduction via solitons in a topological mechanical
  insulator.
\newblock {\em Proceedings of the National Academy of Sciences},
  111(36):13004--13009, 2014.

\bibitem{hackl2020geometry}
Lucas Hackl, Tommaso Guaita, Tao Shi, Jutho Haegeman, Eugene~A Demler, and
  J~Ignacio Cirac.
\newblock Geometry of variational methods: dynamics of closed quantum systems.
\newblock {\em SciPost Physics}, 9(4), 2020.

\bibitem{bianchi2015entanglement}
Eugenio Bianchi, Lucas Hackl, and Nelson Yokomizo.
\newblock Entanglement entropy of squeezed vacua on a lattice.
\newblock {\em Physical Review D}, 92(8):085045, 2015.

\bibitem{attig2017classical}
Jan Attig and Simon Trebst.
\newblock Classical spin spirals in frustrated magnets from free-fermion band
  topology.
\newblock {\em Physical Review B}, 96(8):085145, 2017.

\bibitem{mondragon2014entanglement}
Ian Mondragon-Shem and Taylor~L Hughes.
\newblock Entanglement of a 3d generalization of the kitaev model on the
  diamond lattice.
\newblock {\em Journal of Statistical Mechanics: Theory and Experiment},
  2014(10):P10022, 2014.

\bibitem{o2016classification}
Kevin O'Brien, Maria Hermanns, and Simon Trebst.
\newblock Classification of gapless z 2 spin liquids in three-dimensional
  kitaev models.
\newblock {\em Physical Review B}, 93(8):085101, 2016.

\bibitem{eschmann2020thermodynamic}
Tim Eschmann, Petr~A Mishchenko, Kevin O'Brien, Troels~A Bojesen, Yasuyuki
  Kato, Maria Hermanns, Yukitoshi Motome, and Simon Trebst.
\newblock Thermodynamic classification of three-dimensional kitaev spin
  liquids.
\newblock {\em Physical Review B}, 102(7):075125, 2020.

\end{thebibliography}

\end{document}